\def\BibTeX{{\rm B\kern-.05em{\sc i\kern-.025em b}\kern-.08em
		T\kern-.1667em\lower.7ex\hbox{E}\kern-.125emX}}
\newcommand{\xt}{\mathbf{x}}
\newtheorem{theorem}{Theorem}
\newtheorem{lemma}{Lemma}
\newtheorem{proposition}{Proposition}
\newtheorem{corollary}{Corollary}
\newtheorem{definition}{Definition}
\newtheorem{assumption}{Assumption}
\newcommand*{\red}{\color{red}}
\newcommand*{\blue}{\color{black}}
\definecolor{mygreen}{RGB}{32,178,170}  
\definecolor{mygolden}{RGB}{255,140,0} 
\begin{document}
\title{OFDMA-F$^2$L: Federated Learning With Flexible Aggregation Over an OFDMA Air Interface}
\author{Shuyan Hu,~\IEEEmembership{Member, IEEE},
Xin Yuan,~\IEEEmembership{Member, IEEE}, Wei Ni,~\IEEEmembership{Fellow, IEEE}, \\
Xin Wang,~\IEEEmembership{Fellow, IEEE}, Ekram Hossain,~\IEEEmembership{Fellow, IEEE},
and H. Vincent Poor,~\IEEEmembership{Life Fellow, IEEE}
	\thanks{Copyright (c) 2015 IEEE. Personal use of this material is permitted. However, permission to use this material for any other purposes must be obtained from the IEEE by sending a request to pubs-permissions@ieee.org.

Manuscript received May 10, 2023; revised September 29, 2023; accepted November 14, 2023.
Work in this paper was supported by the National Natural Science Foundation of China under Grants No. 62231010,
No. 62071126 and No. 62101135,
the Innovation Program of Shanghai Municipal Science and Technology Commission under Grant
No. 21XD1400300,
and the U.S National Science Foundation under Grants No. CNS-2128448 and ECCS-2335876. 
Shuyan Hu and Xin Yuan contributed equally to this work.
(Corresponding author: Xin Wang).

S. Hu and X. Wang are with the Key Lab of EMW Information (MoE), the Department of Communication Science and Engineering, Fudan University, Shanghai 200433, China (e-mails: \{syhu14, xwang11\}@fudan.edu.cn).

X. Yuan and W. Ni are with the Data61, Commonwealth Scientific and Industrial Research Organization, Sydney, Marsfield, NSW 2122, Australia
(e-mails: \{xin.yuan, wei.ni\}@data61.csiro.au).

E. Hossain is with the Department of Electrical and Computer Engineering, University of Manitoba, Canada (e-mail: ekram.hossain@umanitoba.ca).

H. V. Poor is with the Department of Electrical and Computer Engineering, Princeton University, Princeton, NJ 08544 USA (e-mail: poor@princeton.edu).

}}

\maketitle

\begin{abstract}

Federated learning (FL) can suffer from a communication bottleneck when deployed in mobile networks, limiting participating clients and deterring FL convergence. The impact of practical air interfaces with discrete modulations on FL has not previously been studied in depth.
This paper proposes a new paradigm of flexible aggregation-based FL (F$^2$L) over orthogonal frequency division multiple-access (OFDMA) air interface, termed as ``OFDMA-F$^2$L'', allowing selected clients to train local models for various numbers of iterations before uploading the models in each aggregation round. We optimize the selections of clients, subchannels and modulations, adapting to channel conditions and computing powers.
Specifically, we derive an upper bound on the optimality gap of OFDMA-F$^2$L capturing the impact of the selections, and show that the upper bound is minimized by maximizing the weighted sum rate of the clients
per aggregation round.
A Lagrange-dual based method is developed to solve this challenging mixed integer program of weighted sum rate maximization, revealing that a ``winner-takes-all'' policy provides the almost surely optimal client, subchannel, and modulation selections.
Experiments on multilayer perceptrons and convolutional neural networks show that 
OFDMA-F$^2$L with optimal selections can significantly improve the training convergence and accuracy, e.g., by about 18\% and 5\%, compared to potential alternatives.

\end{abstract}

\begin{IEEEkeywords}
Federated learning, flexible aggregation, convergence analysis, client selection, channel allocation, modulation selection.
\end{IEEEkeywords}

\IEEEpeerreviewmaketitle

\section{Introduction}\label{sec-intro}

As a new framework for distributed online computing and model training,
federated learning (FL) has shown significant potential for applications, e.g., Internet-of-Things,
autonomous driving, remote medical care, etc.~\cite{lim20}.
FL enables individual mobile clients to train a global model collectively 
without releasing their data~\cite{hu21dml}.
In particular, each client trains its local model independently relying on its local dataset and sends the gradient of the local model to a server.
The server aggregates the gradients and broadcasts the aggregated global parameter to assist the clients in their local training. 

{\blue 
A challenge faced by FL is the communication bottleneck from the clients to the server,
especially when dynamic, lossy, and resource-constrained channels are considered in wireless FL~(WFL)~\cite{quzhihao22}.
The channels of geographically dispersed clients can differ significantly and change over time, leading to different model uploading delays among the clients.
Moreover, the clients can have different computation capabilities, resulting in different local training times,
and consequently, different time budgets for model uploading in an FL aggregation round.
While more participating clients and hence larger overall training datasets are conducive to better training performance~\cite{hu21dml},
this could lead to a shortage of wireless resources (e.g., bandwidth) for uploading local models.
It is crucial to jointly design the local model training and uploading (transmission) schedule of the clients to balance communication and computation.

The transmission schedule of the clients depends on the model aggregation mechanism adopted.
Typical synchronous FL (Sync-FL) requires all participating clients to upload their local models (or gradients) for a global aggregation after completing their local training~\cite{vandinh21}.
Alternatively, asynchronous FL (Async-FL) allows each client to upload its local models straightaway after completing its local training. In the latter case, there can be non-negligible gaps between the time the local models are uploaded and the time the global model is broadcast.
The computing powers of the clients may not be efficiently exploited.
Clients with more powerful computation capability have to wait longer for global model update,
especially when there are ``stragglers''~\cite{lagc20}.
Flexible aggregation-based FL (F$^2$L) has recently emerged as a promising solution to the issue of ``stragglers'', where clients can train different numbers of iterations before sending their local models synchronously
for a global aggregation~\cite{ruan2021towards}.

The transmission schedule of the clients also depends on their instantaneous computing powers~\cite{xie2020}.
A client with a weak computing capability has to transmit at a high data rate
to complete its local model training and uploading within the same period.
Consider a widely-adopted, practical orthogonal-frequency division multiple-access (OFDMA)
as the air interface for WFL.
Effective selections of the clients, channels, and modulations are key to the transmission schedule.
However, the selections form an integer program with discrete variables.
This renders the joint optimization of the communication and computation of OFDMA-based WFL systems
(for example, OFDMA-based F$^2$L, referred to as ``OFDMA-F$^2$L'' in this paper)
a typically intractable mixed-integer programming problem.

}

\subsection{Related Work}

Some existing work has focused on the training accuracy, efficiency, and robustness of WFL
under a persistent policy of client selection and bandwidth allocation (e.g., see \cite{hu21dml, trainyang20, qunsong20, jiaoyutao21, xutwc21, linxi23}).
A partial participating scheme of WFL clients was investigated in~\cite{vandinh21},
where only some clients were chosen to send their local models for global aggregations to minimize the power and time usage of the learning task.
A partial synchronization parallel approach was proposed in~\cite{quzhihao22} for a relay-assisted WFL, 
where local models were uploaded simultaneously and aggregated at relay nodes to reduce traffic.
The same convergence speed was achieved as the successive synchronization methods.
Yet, none of these studies \cite{trainyang20, qunsong20, jiaoyutao21, xutwc21, linxi23, vandinh21, quzhihao22}
optimized communication performances for WFL.

Considering the communication performance metrics, energy-efficient WFL has been developed to cope with lossy and dynamically varying wireless environments
in~\cite{amr20, chenhao22, qinzhijin22, zhaohui21, heyejun23, wunoma22, salehi21}.
A problem of joint training and communication was constructed in~\cite{zhaohui21}
to minimize the overall power usage of a WFL system under a training latency requirement.
The optimal resource allocation policy was obtained by using bisection search.
In~\cite{heyejun23}, a hierarchical FL architecture was proposed for a heterogeneous cellular system, where the clients were jointly served by macro and micro base stations (BSs).
Deep reinforcement learning (DRL) was employed to minimize the power usage under a delay constraint
by adjusting client-BS association and resource allocation.
Non-orthogonal multiple access (NOMA)-aided FL was examined in~\cite{wunoma22},
where a set of clients formed a NOMA cluster to upload their local models to a BS
for model aggregation.
The total power usage and FL convergence latency were minimized
by jointly optimizing the NOMA transmission, BS broadcast, and FL training accuracy.
Resource-constrained WFL with partial aggregation was studied in~\cite{salehi21},
where stochastic geometry tools were employed to approximately calculate the success probability for each device.
All these studies~\cite{amr20, chenhao22, qinzhijin22, zhaohui21, heyejun23, wunoma22, salehi21}
were based on the Sync-FL framework.

The Async-FL framework was proposed to handle clients with different communication and computing capabilities~\cite{xie2020, chenyang20, chai21, gubin22, zhangyu23, hu2021source, yuan2023}.
In~\cite{wangzhyu22}, Async-FL was developed, which adapted to the heterogeneity of clients regarding their delays in model training and transmitting.
The fusion weights of the global aggregation were determined to prevent biased convergence
by requiring the weight of each client's local dataset to be in proportion to its sample number.
In~\cite{jianchun23}, an adaptive Async-FL mechanism was developed, where only a small number of the total local updates were collected by the FL server, depending on their arriving order per round.  
To reduce the mission completion time, DRL was applied to specify the number of aggregated local updates under resource constraints.

An alternative to Async-FL is F$^2$L~\cite{ruan2021towards},
where clients may have trained for different numbers of iterations when uploading their local models.
Different computing powers of the clients can be efficiently leveraged, 
although persistent global aggregation cycles are maintained as in Sync-FL.
However, no consideration has been given to the delays resulting from model uploading under flexible aggregation, which would affect the local training time and performance.
Table~\ref{tab.review} summarizes the existing works and highlights the contribution of this work.

{\blue In a different yet relevant context, over-the-air (OTA) FL has been investigated to alleviate the communication bottleneck of
WFL~\cite{zhugx, zhugxu, dujun, zhongcx}.
For instance, a broadband analog aggregation and transmission scheme was proposed in~\cite{zhugx} and~\cite{zhugxu},
where local model gradients were modulated in the analog domain and aggregated by exploiting the waveform-superposition property of radio.
An adaptive device scheduling mechanism was developed for OTA-FL in~\cite{dujun},
where data quality and energy consumption served as the criteria to select clients for OTA model aggregations
to improve the convergence and accuracy of OTA-FL.
A new model aggregation method was developed in~\cite{zhongcx} to align the local model gradients of different devices to resolve the straggler problem in OTA-FL.
Moreover, in~\cite{xichen}, a decentralized power control policy was designed to enhance the convergence of
OTA-FL under fading channels, where stochastic optimization was leveraged to decouple the power control among clients.
}

\begin{table*}[t]
\renewcommand{\arraystretch}{1.2}
\centering
\caption{Related existing works on WFL and the key contributions of this paper}\label{tab.review}
\begin{tabular}{ | c | c |  p{4.3cm}<{\centering} | p{5.8cm}<{\centering}|} 
\hline
Related work  &Year of publication  &Main topic    &Difference and improvement of this paper \\ \hline \hline
\cite{vandinh21, quzhihao22}  &2021--2022    &Sync-FL with partial synchronization or aggregation      
&F$^2$L over an OFDMA air interface
\\ \hline
\cite{ruan2021towards} &2021  &F$^2$L  &WFL over an OFDMA air interface \\ \hline
\cite{trainyang20, qunsong20, jiaoyutao21, xutwc21, linxi23}   &2020--2023   
&Sync-FL with persistent client and bandwidth assignments  
& Selections of client, channel and modulation for OFDMA-F$^2$L  \\ \hline
\cite{amr20, chenhao22, qinzhijin22, zhaohui21, heyejun23}   &2020--2023    
&Joint optimization of communication and computation for Sync-FL     
&Selections of client, channel and modulation for OFDMA-F$^2$L   \\ \hline
\cite{wunoma22}  &2022  &NOMA-aided Sync-FL  
&F$^2$L over an OFDMA air interface \\ \hline
\cite{salehi21}  &2021  & Wireless resource-constrained Sync-FL with partial aggregation
&Selections of client, channel and modulation for OFDMA-F$^2$L  \\ \hline
\cite{xie2020, chenyang20, chai21, gubin22, zhangyu23, hu2021source, yuan2023, wangzhyu22}     &2020--2023  
&Async-FL    &F$^2$L over an OFDMA air interface \\ \hline
\cite{jianchun23} &2023  &Adaptive Async-FL with partial global aggregation 
&F$^2$L over an OFDMA air interface \\ \hline

\end{tabular}
\end{table*}

\subsection{Contribution and Organization}

This paper proposes the novel OFDMA-F$^2$L approach, 
where clients can train their local models using different numbers of local iterations before uploading their local models or model gradients to an FL server for a global aggregation via an OFDMA air interface.
A new algorithm is proposed to facilitate the convergence of OFDMA-F$^2$L (i.e., to minimize the convergence upper bound), by jointly optimizing the selections of clients, subchannels, and discrete modulations for local training and model uploading, adapting to the channel conditions and computing powers of the clients. The optimal number of local training iterations is accordingly specified per selected client per aggregation round. 

The key contributions of this paper can be summarized as follows:
\begin{itemize}
\item The OFDMA-F$^2$L approach is proposed,
where selected clients train local models, adapting to their available computing powers and times. The clients can also upload their local models concurrently in different subchannels,
increasing participating clients and extending their training times. 
	
\item 
We analyze a convergence upper bound on OFDMA-F$^2$L that quantifies the impact of the selections of clients, subchannels, and modulations on the convergence.

\item 
A new problem is formulated to minimize the convergence upper bound of OFDMA-F$^2$L, and transformed to maximize the weighted sum rate of the clients by optimizing the selections of clients, subchannels, and modulations.

\item By designing a Lagrange-dual based method, we 
discover that a ``winner-takes-all'' policy ensures an optimal selections of clients, subchannels, and modulations almost surely in order to achieve the fast convergence of OFDMA-F$^2$L. 
\end{itemize}

We experimentally evaluate the proposed OFDMA-F$^2$L with the optimal client, subchannel, and modulation selections on multilayer perceptron (MLP)
and convolutional neural network (CNN) models and the MNIST, CIFAR10, and Fashion MNIST (FMNIST) datasets.
It is shown that OFDMA-F$^2$L with the optimal selections significantly outperforms its alternatives with partially optimal selections or under Sync-FL, e.g., by about 18\% and 5\%, in convergence and training accuracy on the MLP model.

The remainder of this paper is organized as follows.
Section~\ref{sec-model} sets forth the system model. 
Section~\ref{sec-problem} formulates the problem of interest.
Section~\ref{sec-convergence analysis} analyzes the convergence bound of the proposed algorithm.
Section~\ref{sec-optimization} obtains the optimal assignments of clients, subchannels,
and modulations.
The OFDMA-F$^2$L system is numerically evaluated in Section~\ref{sec-sim},
followed by conclusions in Section~\ref{sec-con}.
The notations used in this paper is collected in Table~\ref{table_notations}.

\begin{table}[t]
	\caption{List of notations}
	\renewcommand\tabcolsep{1.2pt}
	\begin{center}
		\begin{tabularx}{8cm}{lX}
			\toprule[1.5pt]
			Notation & Definition \\			
      \hline
  $\bm{\omega}$, ${\bm{\omega}}^*$ & Global model parameter and optimal global model parameter of OFDMA-F$^2$L, respectively \\
			${\cal D}_m$,~${\cal D}$ & Local dataset at client $m$, and the collection of all local datasets, respectively \\
   $F(\bm{\omega})$ & Global loss function of OFDMA-F$^2$L\\
   $\nabla F(\cdot)$ & Gradient of a function $F(\cdot)$\\				
	$t,~T$ & Index and total iteration numbers of OFDMA-F$^2$L \\
            $A$ & Maximum number of local iterations between two consecutive aggregations \\
			$\tau,~G$ & Index and maximum number of aggregations\\
			$M,~\cal M$ & Number and set of clients \\
			$K,~{\cal K}$ & Number and set of subchannels\\
			$L,~{\cal L}$ & Number and set of modulations\\
   $h_{m,k}$ & The coefficient of subchannel from client $m$ to the BS \\
			${n}_{m,k}$ & CSCG noise with zero mean and variance $\sigma^2$\\
			$\lambda_{m,k,l}$ & Binary selection indicator indicating the selection of client $m$ and modulation $l$ in subchannel $k$\\
   $\zeta_m$, $\bm{\zeta}$& Selection indicator of Client $m$ and the set of selection indicators for all clients, respectively\\
			$r_l$ & Data rate of modulation $l$\\
			$p_{m,k,l}$ & Transmit power of client $m$ in subchannel $k$ when modulation $l$ is used \\
			$P_m^{\max}$ & Power budget of client $m$ \\
			$\chi_{m,k,l}$ & BER of the signals uploaded by client $m$ using subchannel $k$ and 
   modulation $l$\\
			$\chi_0$ & Required BER \\           
			\toprule[1.5pt]
		\end{tabularx}
	\end{center}
	\label{table_notations}
\end{table}

\section{System Model and Assumptions}\label{sec-model}
\begin{figure}[tb]
	\centering
	\includegraphics[width=0.9\columnwidth]{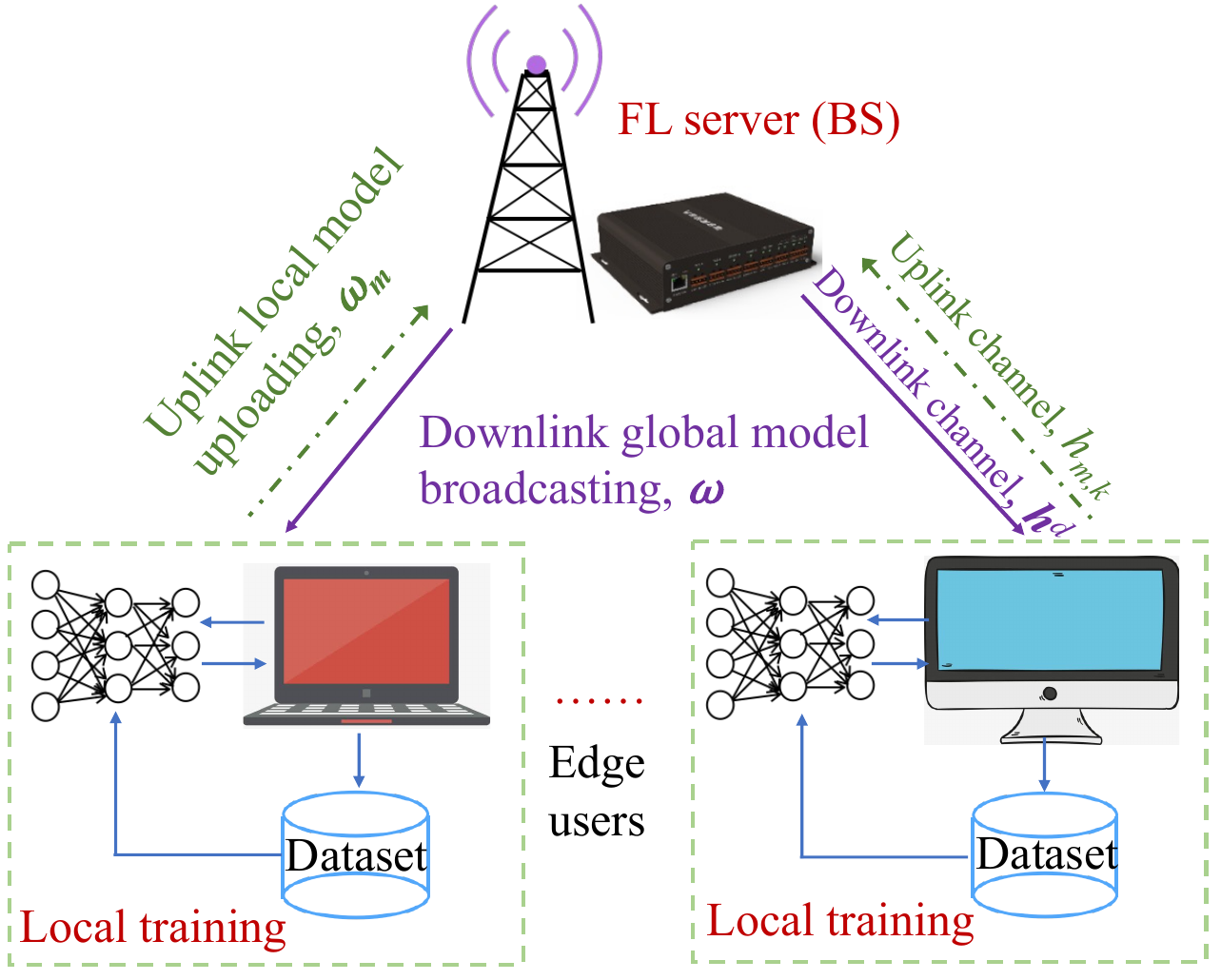}
	\caption{The proposed OFDMA-F$^2$L, where a BS serves as the FL server and multiple clients act as clients. The selected clients send their local models to the BS using OFDMA.}
	\label{fig-sysmodel}
\end{figure}

We consider flexible aggregation-based FL in an OFDMA system, i.e., OFDMA-F$^2$L, as shown in Fig.~\ref{fig-sysmodel},
where a BS serving as a central server connects $M$ edge clients over $K$ orthogonal subchannels.
The BS and clients all have a single antenna.
We use ${\cal M} = \{1,\cdots,M\}$ and ${\cal K} = \{1,\cdots,K\}$ to represent the sets of clients and subchannels, respectively.
The BS wishes to learn a machine learning (ML) model using the datasets stored at the clients, since the clients are reluctant to share their data for privacy concerns. 
As a result, the clients train and transmit their local models to the BS via the uplink OFDMA subchannels, one client per subchannel.

Upon the receipt of the local models, the BS aggregates them into a global model and distributes the global model to the clients through a downlink broadcast. 
Considering limited communication and computation resources, only some clients are chosen to train and upload their local models. The BS performs client selection and resource allocation.

\subsection{F$^2$L Model}
Let ${\cal D}_m$ denote the local dataset at client $m \in {\cal M}$, and $D_m = |{\cal D}_m|$ denote the size of the local dataset, where $|\cdot|$ stands for cardinality. ${\cal D} = \underset{\forall m \in  {\cal M}}{\cup} {\cal D}_m $ collects all local datasets at the clients with $D = |{\cal D}| = \sum_{m \in {\cal M} }D_m$ being the size of ${\cal D}$.
Assume that each client $m$ divides its local dataset ${\cal D}_m$ into multiple independent and identically distributed (i.i.d.) mini-batches with the same distribution as ${\cal D}_m$. 
$F_m(\bm{\omega};{\cal X})$ is the loss function of client $m \in {\cal M}$, measuring the model error on the client's dataset ${\cal X} \subseteq {\cal D}_m$.
Here, $\bm{\omega}$ is the global model parameter vector. 
The global loss function at the BS, denoted by $F(\bm \omega)$, is 
acquired by gathering the local models uploaded from the selected clients, as given by
\begin{equation}\label{eq-problem}
\begin{aligned}
& {\bm \omega}^* = \arg \underset{\bm \omega,\,\{\zeta^{\tau},\forall \tau\}}{\min}~ F({\bm \omega}),\\
\text{ with }
& F({\bm \omega}) \triangleq 
\frac{1}{B}\sum_{m \in {\cal M}} \sum_{{\tau}} \varrho^{\tau}_{m} \zeta^{\tau}_m 
F_{m}({\bm \omega};{\cal B}^{\tau}_m),
\end{aligned}
\end{equation}
where $\bm{\omega}^*$ is the optimal model parameter minimizing $F(\bm \omega)$; 
${\boldsymbol{{\cal B}}}^{\tau}_m \subseteq {\cal D}_m$ collects the set of mini-batches sampled from the local dataset ${\cal D}_m$ during the $\tau$-th aggregation round;
$B = \sum_m\sum_{\tau} I^{\tau}_m$ and $I^{\tau}_m $ is the number of local training iterations at client $m$ in the $\tau$-th aggregation round.
$\varrho^{\tau}_m = \frac{A}{I^{\tau}_m} \cdot \frac{D_m}{D}$ is the aggregation coefficient of client $m$ for the $\tau$-th aggregation of F$^2$L, where  $A$ is the maximum number of iterations a client can train locally per aggregation round and $I^{\tau}_m \leq A$.
In this sense, greater aggregation coefficients can be given to clients that train fewer iterations in an aggregation round, compensating for the additional iterations other clients completed and contributing to unbiased gradient after aggregation~\cite[Sec. 4.1, Scheme C]{ruan2021towards}.

Here, 
$\bm{\zeta^{\tau}} = \{\zeta^{\tau}_1,\cdots, \zeta^{\tau}_M\}$ collects the client selection indicators with
$\zeta^{\tau}_m$ indicating client $m$ is selected for the $\tau$-th aggregation; i.e.,  $\zeta^{\tau}_m = 1$ if client $m$ is selected; otherwise, $\zeta^{\tau}_m = 0$.
The client selection is critical for FL, 
especially when the communication and computation resources are constrained. Not all clients can upload their local models to the BS.

{\blue OFDMA-F$^2$L supports flexible aggregation at the BS~\cite{ruan2021towards},
where the clients can execute different numbers of local iterations per aggregation round 
and, if selected, transmit their latest local models to the BS for a synchronous global model aggregation
(in other words, the clients complete uploading their local models at the same time
so that the global aggregation can start straightaway).
The FL process is divided evenly into $G$ global aggregations.
Each aggregation round lasts for the same duration.}
The OFDMA-F$^2$L contains three alternating steps: 
\begin{itemize}
\item \textit{Local Model Training: }After the $\tau$-th global model aggregation
($\tau = 1, \cdots, G$),
each selected client $m$ ($\zeta^{\tau}_m = 1$) performs $I^{\tau}_m$ iterations of local training.
The local training of each selected client $m$ for the
$(\tau +1)$-th global aggregation is initialized to be $\bm{\omega}^{\tau}_m(0) = \bm{\omega}^{\tau}$, with $\bm{\omega}^{\tau}$ being the global model acquired in
the $\tau$-th global model aggregation. The local model of client $m$ at the $i$-th iteration
($i=1, \cdots, I^{\tau}_m$), denoted by $\bm{\omega}^{\tau}_m(i) $, is refreshed by
\begin{equation}\label{eq_local_SGD}
\bm{\omega}^{\tau}_m(i) = {\bm{\omega}}^{\tau}_m (i - 1)  - \eta \nabla F_m \left({\bm{\omega}}^{\tau}_m (i - 1);{\cal B}^{\tau}_m (i-1) \right),
\end{equation} 
where $\eta$ is the learning rate, and ${\boldsymbol{{\cal B}}}^{\tau}_m = \{{{\cal{ B}}}^{\tau}_m (0), \cdots, {\cal B}^{\tau - 1}_m (I^{\tau}_m -1)\}$ collects all mini-batches sampled from the local dataset ${\cal D}_m$.
${\cal B}^{\tau}_m (i)$ is the mini-batch used by device $m$ for the $i$-th local iteration of the $\tau$-th global aggregation.
    
    \item \textit{Global Aggregation:}
The clients selected by the BS upload their local models for the $(\tau+1)$-th global model aggregation after their local training. The global model obtained is
\begin{equation}\label{eq_global_weight}
\begin{aligned}
	\bm{\omega}^{\tau+1 } & = \sum_{m \in {\cal M}} \varrho^{\tau}_m \zeta^{\tau}_m 
   {\bm{\omega}^{\tau}_m } \\
   & = \sum_{m \in {\cal M}} \varrho^{\tau}_m \zeta^{\tau}_m \left(\sum^{I^{\tau}_m -1}_{i=0} \left( {\bm{\omega}^{{\tau }}_m {(i)} } - \eta \nabla F_m ({\bm{\omega}}^{\tau }_m{(i)}; {\cal B}^{\tau}_m{(i)}) \right)\right).
\end{aligned}
\end{equation}

\item \textit{Global Model Update: }The BS broadcasts the global model $\bm{\omega}^{\tau}$ and selects the clients for the next global aggregation.
Then, the selected clients start their local training iterations based on~$\bm{\omega}^{\tau}$.    
\end{itemize}
\textbf{Algorithm~\ref{algo_DGD}} summarizes the OFDMA-F$^2$L method.
As will be described in Section~\ref{subsec - definition and assumption}, we assume a convex loss function of the ML task, and hence,  OFDMA-F$^2$L converges. Suppose that the optimal ML model parameter $\bm{\omega}^*$ can be acquired after $G$ global aggregations. $F(\bm \omega^*)$ provides the minimum global loss. Nevertheless, when the loss function is non-convex, OFDMA-F$^2$L can still converge, as typically observed in the literature, e.g., \cite{chenyang20,yuan2023}, and numerically validated in Section~\ref{subsec-cnn}.
Note that if $I^{\tau}_m = A, \forall m \in {\cal M}$, i.e., all selected clients finish a given number of $A$ local iterations (or updates), Algorithm~\ref{algo_DGD} becomes the FedAvg algorithm~\cite{mcmahan2017communication}.

\begin{algorithm}[!t]\small
	\caption{OFDMA-F$^2$L.}
	\label{algo_DGD}
	\KwIn{$I^{\tau}_m$, $G$, $\bm{\omega}^0$, $\epsilon$, $\delta$, and $C$.}
	\KwOut{$\bm{\omega}^T$.}
 	\LinesNumbered
	Initialize: $t=0$, $\bm{\omega}^0$\;
        \For{$\tau = 1:G$}{
	\While {$i \leq I^{\tau}_m$}{
		\tcp{ Local model update}
		            \quad $\bm{\omega}^{\tau}_m(i) = {\bm{\omega}}^{\tau}_m (i - 1)  - \eta \nabla F_m ({\bm{\omega}}^{\tau}_m (i - 1) )$\;
		 Clip the local parameters:
		             \qquad\quad $\bm{\omega}^{\tau}_m(i) = \bm{\omega}^{\tau}_m(i) / \max\left(1,\frac{\left\|\bm{\omega}^{\tau}_m(i) \right\| }{C} \right)$\;
	$i \leftarrow i+1$;
}
\tcp{ Global model aggregation} 
 
{	Update the global model parameters: 
		  $\bm{\omega}^{\tau } = \sum_{m \in {\cal M}} \varrho^{\tau}_m \zeta^{\tau}_m \bigg(\sum^{I^{\tau}_m -1}_{i=0} \Big( {\bm{\omega}^{{\tau }}_m {(i)} } - \eta \nabla F_m ({\bm{\omega}}^{\tau }_m{(i)}; {\cal B}^{\tau}_m{(i)}) \Big)\bigg)$\;
    Update the local model parameters:
	\qquad  ${\bm{\omega}}^{\tau}_m(0) = \bm{\omega}^{\tau}, \; \forall m \in {\cal M}$\;}
}
\end{algorithm}

         
 

\subsection{Transmission Model}

At the beginning of an aggregation round, the BS selects the clients, subchannels, and modulation modes based on the channel state information (CSI) and computing powers that the clients fed back at the end of the previous aggregation round.
Then, the BS broadcasts the selections and the global model aggregated at the end of the previous aggregation round, e.g., using its lowest non-zero modulation mode and full transmit power.
As a result, all clients are expected to reliably receive the global model and the selections/schedule. 
The downlink delay, denoted by $T^{\text{DL}}$, is given.
The global and local models have the same size, that is, $|\bm{\omega}| = |\bm{\omega}_m|$.

Towards the end of each aggregation round, the selected clients upload their local models using an OFDMA protocol. Let ${\xt}_k^{\tau} \triangleq \left[x_{1,k}^{\tau},\cdots,x_{M,k}^{\tau}\right]^T \in {\mathbb{C}^{M \times 1}}$
represent the signals that each of the $M$ clients would send in subchannel $k$,
if it is allocated with the subchannel at the $\tau$-th aggregation round.
Consider Rayleigh fading, the channel gain of client $m$ in subchannel $k$ is
\begin{align}\label{eq-bs-ue}
h_{m,k}^{\tau} = \sqrt{\epsilon_o \left(d_{m} \right)^{-\alpha}} \tilde{h}^{\tau},~\forall m,k,
\end{align}
where $\tilde{h}^{\tau}$ is a random scattering element captured by zero-mean and unit-variance circularly symmetric complex Gaussian (CSCG) variables. $\epsilon_o$ is the path loss at the reference distance $d_0 =1$ m. $\alpha$ is the path loss exponent.
$d_{m}$ is the distance between client $m$ and the BS.

Assume that the channels experience block fading, i.e., the channels stay the same within a global aggregation round and vary independently between aggregation rounds~\cite{Viswanathan1999Capacity}. 
At the ${\tau}$-th aggregation round, the received SNR at the BS from client $m$ in subchannel $k$ is given by
\begin{equation}\label{eq-snr}
     \gamma_{m,k}^{\tau} = \frac{ p_{m,k}^{\tau}|h_{m,k}^{\tau}|^2}{\sigma^2},
\end{equation}
where $p_{m,k}^{\tau}$ is the transmit power of client $m$ in subchannel~$k$,
and $\sigma^2$ is the variance of the zero-mean CSCG noise ${n}_{m,k}^{\tau}$, i.e.,
${n}_{m,k}^{\tau} \in {\cal CN}\left(0,\sigma^2 \right)$.

Each client can take a modulation mode from a discrete set of $L$ modulation modes. The set is denoted by ${\cal L} = \{0, 1, \cdots, L\}$, where $L=|\mathcal{L}|$ and $|\cdot|$ stands for cardinality.
The data rate of modulation $l$ is represented by $r_l$. 
$l = 0$ indicates no transmission, i.e., $r_0 =0$.
At the BS, the bit error rate (BER) of the local model from client $m$ using subchannel $k$
and modulation $l$ is~\cite{Goldsmith1998}
\begin{equation}\label{eq-ber}
	\chi_{m,k,l}^{\tau} = \beta_1 \exp\left(-\frac{\beta_2 \gamma_{m,k}^{\tau}}{2^{r_l} - 1} \right), 
\end{equation}
where $\beta_1$ and $\beta_2$ are constants relying on the modulation.

To ensure the local model parameters transmitted from client $m$ to the BS using
subchannel $k$
and modulation $l$ achieve the required BER $\chi_0$,
the minimum transmit power required is~\cite{Malik2018Interference}
\begin{equation}\label{eq-power}
p_{m,k,l}^{\tau} 
= \frac{\left(2^{r_l} -1 \right) \ln\left(\frac{\beta_1}{\chi_0} \right)\sigma^2 }{\beta_2 \left|h_{m,k}^{\tau} \right|^2},
\end{equation}
which is acquired by outplacing $\chi_{m,k,l}^{\tau}$ with $\chi_0$ in \eqref{eq-ber} and then substituting \eqref{eq-snr} into \eqref{eq-ber}, followed by rearranging \eqref{eq-ber}.
Here, we choose a sufficiently small value of $\chi_0$ (e.g., $\le 10^{-6}$).
In other words, the BS receives reasonably accurate and reliable local models.

Let $\lambda_{m,k,l}^{\tau} = 1$ denote the assignment of subchannel $k$ and modulation $l$ to client $m$ to upload its local model, given $|h_{m,k}^{\tau}|^2$;
and $\lambda_{m,k,l}^{\tau} = 0$ denotes otherwise.
Let $\bm{\lambda}^{\tau} := \left\lbrace \lambda_{m,k,l}^{\tau}, \;\forall m \in {\cal M},\;
k \in {\cal K},\; l \in {\cal L}\right\rbrace $
represent the collection of all indicators to be optimized in the $\tau$-th aggregation round.
The data rate of client $m$ is obtained as 
\begin{equation}\label{eq.datarate}
R^{\text{UL},\tau}_m = \sum_{k = 1}^{K}\sum_{l=0}^{L} \lambda_{m,k,l}^{\tau} \times r_l.
\end{equation}
According to the minimum transmit power specified in \eqref{eq-power}, the transmit power of client $m$ is 
\begin{equation}
	P_m^{\tau} = \sum_{k = 1}^K \sum_{l=0}^{L} \lambda_{m,k,l}^{\tau} \times p_{m,k,l}^{\tau}. 
\end{equation}

The delay for client $m$ to upload its local model is 
\begin{equation}\label{eq.tmurmu}
T^{\text{UL},\tau}_{m} = \frac{|\bm{\omega}_m|}{R^{\text{UL},\tau}_m} = \frac{|\bm{\omega}_m|}{\sum_{k = 1}^{K}\sum_{l=0}^{L} \lambda_{m,k,l}^{\tau} \times r_l},
\end{equation}
where $|\bm{\omega}_m|$ is the size of client $m$'s local model (in bits).

\subsection{Computing Model}

We adopt the number of floating-point operations (FLOPs) to reflect the computational capability of the clients.
$\mu$ denotes the number of FLOPs required to train a local model relying on a mini-batch of a local dataset.
In the $\tau$-th aggregation round, the number of FLOPs required for client $m$  is $\mu_m^{\tau} = \mu I_m^{\tau}$.
The local training delay of client $m$ is
\begin{equation}\label{eq.tmc}
    T^{{\rm C}, \tau}_m = \frac{\mu_m^{\tau}}{\beta_m} = \frac{\mu I_m^{\tau}}{\beta_m}.
\end{equation}
Here, $\beta_m$ is the computational (or processing) speed of client~$m$.


As per any global aggregation $\tau$, the total delay undergone by client $m$ is given by
\begin{equation}\label{eq.tmm}
    T_m^{\tau} = T^{{\rm C}, \tau}_m + T^{\text{UL},\tau}_m + T^{\text{DL}}, 
\end{equation}
which considers both the transmission and computation delays.
{\blue Given the equal duration of each FL round,
if the uplink and downlink delays are longer,
there is a shorter time for local model training.
If the computing delay is longer per iteration, the client can only complete fewer iterations.
This impacts the convergence and accuracy of the training.}

\section{Problem Statement}\label{sec-problem}
We propose to jointly design the selection of clients, subchannels and modulations,
i.e., $\bm{\lambda}^{\tau}$, 
to minimize the global loss function of the OFDMA-F$^2$L system
per aggregation round, i.e., the $\tau$-th round.
The problem is formulated as 
\begin{subequations}\label{eq-P1}
	\begin{align}
	\textbf{P1}:\;\min_{\{\bm{\lambda}^{\tau}, {\bm \omega} \}}\;\; 
&   \sum_{m \in {\cal M}} \frac{D_m \zeta_m^{\tau} F_{m}({\bm \omega})}{D} \label{eq-P1 a} \\ 
	{\text{s.t.}}\; & \zeta_m^{\tau} = {\rm sgn}\left(\sum_{k=1}^{K} \sum_{l=0}^{L} \lambda_{m,k,l}^{\tau}\right), \label{eq-P1 b1}\\
 & P_m^{\tau}  \leq P^{\max}_m,\forall m,\label{eq-P1 b}\\
	&\sum_{k=1}^{K} \sum_{m = 1}^M\sum_{l=0}^{L} \lambda_{m,k,l}^{\tau} \leq K,\label{eq-P1 d}\\
	&\sum_{m = 1}^M\sum_{l=0}^{L} \lambda_{m,k,l}^{\tau} \leq 1,~\forall k,\label{eq-P1 e}\\
	& \lambda_{m,k,l}^{\tau} \in \{0,1\}, \label{eq-P1 f}\\
	& T_m^{\tau} \leq T_{\rm th}, \label{eq-P1 h}        
	\end{align}
\end{subequations}
where ${\rm sgn} (\cdot)$ is the sign function.
{\blue The objective~\eqref{eq-P1 a} stems from~\eqref{eq-problem}.
Specifically, \eqref{eq-problem} minimizes the global loss function over all aggregation rounds. 
\eqref{eq-P1 a} decouples the minimization over time and optimizes the variables and parameters for each round,
which is in line with practical FL implementations resulting from causality.}
Hence, constraint~\eqref{eq-P1 b1} yields
\begin{subequations}\label{eq.zeta}
    \begin{numcases}{\zeta^{\tau}_m =} 
      & $1, ~\text{if}~\sum_{k=1}^{K}\sum_{l=0}^{L} \lambda^{\tau}_{m,k,l}   \geq 1; $
       \label{eq.zeta.a} \\
      & $0, ~\text{if}~\sum_{k=1}^{K}\sum_{l=0}^{L} \lambda^{\tau}_{m,k,l}    = 0.$ 
       \label{eq.zeta.b} 
    \end{numcases}
\end{subequations}
In other words, client $m$ is selected if at least one subchannel and modulation are allocated to the client.
Constraint~\eqref{eq-P1 b} indicates that the sum transmit power of each client is upper bounded by~$P^{\max}_m$.
\eqref{eq-P1 d} specifies that the number of subchannels allocated to all clients is no greater than~$K$.
Constraint \eqref{eq-P1 e} ensures that a subchannel is allocated to at most one client to avoid inter-client interference.
\eqref{eq-P1 h} indicates that the delay (including communication and computation delay) required
to conduct the F$^2$L is upper bounded by a pre-specified threshold,
i.e., $T_{\rm th}$, for each global aggregation. 

The determination of $\lambda_{m,k,l}^{\tau}$ relies on not only the transmission rate (or time), but also the computing capability of each client.
This indicates that the computing time $T_m^{\rm C, \tau}$ or the
number of iterations per selected client $I_m^{\tau}$
is also optimized implicitly by solving problem \textbf{P1} per round~$\tau$.
Once $\lambda_{m,k,l}^{\tau}$ is decided, the transmit power of client $m$ using subchannel $k$
and modulation $l$, that is, \eqref{eq-power}, is set to satisfy the BER requirement.

Problem \textbf{P1} is new, yet mathematically intractable.
It is non-convex in $\bm{\lambda}^{\tau}$ and ${\bm \omega}$,
and has a mixed integer programming nature resulting from the client selection and subchannel assignment. 
Moreover, the global loss function $F({\bm{\lambda}^{\tau}},{\bm \omega})$ is non-convex
and has a large number of parameters to be optimized.
As a consequence, problem \textbf{P1} cannot be directly solved using conventional optimization methods,
such as alternating optimization and successive convex approximation.

We decouple problem \textbf{P1} between the global model learning,
and the assignments of the client and modulation mode per subchannel.
We first derive the convergence upper bound of the global loss function of
OFDMA-F$^2$L, i.e., $F({\bm \omega})$,
and then derive the almost surely optimal assignments of clients, channels, and modulations
to minimize the upper bound.

\section{Convergence Analysis of OFDMA-F$^2$L}\label{sec-convergence analysis}

This section starts by deriving the optimality gap of OFDMA-F$^2$L.
The optimality gap is minimized per communication by optimally selecting clients, channels, and modulations, to encourage the convergence of the OFDMA-F$^2$L, as will be delineated in Section~V.

\subsection{Definitions and Assumptions}\label{subsec - definition and assumption}
This section establishes the upper bound for the optimality gap of OFDMA-F$^2$L, i.e., \textbf{Algorithm~\ref{algo_DGD}}.
We start by presenting the definitions and assumptions used.

\begin{definition}[Gradient Divergence]\label{def-divergence}
	 Let $\gamma_m$ denote an upper bound of the gradient difference between the local and global loss functions, i.e.,
	$\left\|\nabla F_m({\bm \omega})-\nabla F({\bm \omega}) \right\| \leq \gamma_m$, $\forall m, \bm \omega$.
	The global gradient divergence is $\gamma \triangleq  \frac{\sum_{m} D_m \gamma_m}{{D}}$.
\end{definition}

\begin{assumption}\label{assumption}
	$\forall m \in {\cal M}$, we make the following assumptions:
	\begin{enumerate}
	    
		\item The loss function of FL is smooth. Specifically, the gradient of $F_m(\bm \omega)$ is
         $L_c$-Lipschitz continuous~\cite{o2006metric}, that is, $\left\|\nabla F_m ({\bm \omega}^{\tau+1})-\nabla F_m ({\bm \omega}^{\tau}) \right\| \leq L_c \left\| {\bm \omega}^{\tau+1} - {\bm \omega}^{\tau}\right\|,\,\forall {\bm \omega}^{\tau},{\bm \omega}^{\tau+1} $, with $L_c$ being a constant depending on the loss function, so as the gradient of $F({\bm \omega})$ is also $L_c$-Lipschitz continuous;          
         
		\item The learning rate is $\eta \leq \frac{1}{L_c}$;
		
		\item The mean squared norm of the stochastic gradients at each edge client $m$ is uniformly bounded by $\mathbb{E}_{\mathcal{B}^{\tau}_m}\left\| \nabla F_m ({\bm{\omega}}^{\tau }_m(i); \mathcal{B}^{\tau }_m(i)) \right\|^2  \leq \kappa_1 + \kappa_2 \|\nabla F(\bm{\omega}^{\tau })\|^2, \forall m~\text{and}~\tau$ with $\kappa_1,\kappa_2 \geq 0$;

		\item $F_m(\bm \omega)$ meets the Polyak-Lojasiewicz condition~\cite{karimi2016linear}
  with a constant $\rho$, implying that $F(\bm \omega)-F({\bm \omega}^*) \leq \frac{1}{2\rho} \left\| \nabla F(\bm \omega) \right\|^2 $;
  
		\item $F({\bm \omega}^0) - F({\bm \omega}^*) =  \Theta$, in which $\Theta$ is a constant;
  in other words, the initial optimality gap is bounded.
	\end{enumerate}
\end{assumption}

{\blue These assumptions have been extensively adopted in the literature,
e.g., \cite{quzhihao22, jianchun23, yuan2023}.
Many modern ML models have multiple layers and non-convex loss functions (e.g., CNN).
Nonetheless, a considerable number of ML tasks, e.g., logistic regression, have (strongly) convex loss functions, e.g., cross-entropy, ordinary least, etc.
These convex loss functions satisfy \textbf{Assumption 1}.
Many existing algorithms developed under the (strong) convexity assumptions were experimentally validated for ML models with non-convex loss functions,
e.g., \cite{jianchun23, yuan2023}, and references therein.}

\subsection{Convergence Analysis}

Based on Definition~\ref{def-divergence} and Assumption~\ref{assumption}, each local loss function in \eqref{eq-problem} is strongly convex and hence, OFDMA-F$^2$L converges.
We establish the ensuing theorem to evaluate the upper bound for the optimality gap of OFDMA-F$^2$L,
i.e., the gap between ${\bm \omega}^{\tau}$ and ${\bm \omega}^*$. 
\begin{theorem}\label{theo_convergence bound}
	After $\tau$ global aggregations, the convergence upper bound of OFDMA-F$^2$L is 
	\begin{equation}\label{eq_convergnece_bound 0}
	\begin{aligned}
	 F({\bm \omega}^{\tau } )& - F({\bm \omega}^*) \leq (1 - 2\rho \phi_2)^{\tau} \Theta + \phi_1 \frac{1-\left(1 - 2\rho \phi_2\right)^{\tau}}{ 2\rho \phi_2},
	\end{aligned}
	\end{equation}	
	where $\phi_1 \triangleq \frac{\eta^2 L_c}{2} \sum_{m \in {\cal M}}  \frac{A^2 \kappa_1 D_m^2 \zeta^{\tau}_m}{D^2 I^{\tau}_m}$ and $\phi_2 \triangleq \eta - \frac{\eta^2 L_c}{2} \sum_{m \in {\cal M}}  \frac{A^2 \kappa_2 D_m^2 \zeta^{\tau}_m}{D^2 I^{\tau}_m}$.
\end{theorem}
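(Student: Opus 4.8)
The plan is to track the evolution of the global loss $F(\bm{\omega}^{\tau})$ across one aggregation round and then unroll the resulting one-step recursion. First I would use the $L_c$-smoothness of $F$ (Assumption~1.1) to write the descent inequality
\begin{equation}\label{eq-plan-descent}
F(\bm{\omega}^{\tau+1}) - F(\bm{\omega}^{\tau}) \leq \langle \nabla F(\bm{\omega}^{\tau}), \bm{\omega}^{\tau+1} - \bm{\omega}^{\tau} \rangle + \frac{L_c}{2}\left\| \bm{\omega}^{\tau+1} - \bm{\omega}^{\tau} \right\|^2.
\end{equation}
Next I would substitute the aggregation rule~\eqref{eq_global_weight}, so that $\bm{\omega}^{\tau+1} - \bm{\omega}^{\tau} = -\eta \sum_{m} \varrho^{\tau}_m \zeta^{\tau}_m \sum_{i=0}^{I^{\tau}_m-1} \nabla F_m(\bm{\omega}^{\tau}_m(i); \mathcal{B}^{\tau}_m(i))$. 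The key point that makes the bound clean is that $\varrho^{\tau}_m = \frac{A}{I^{\tau}_m}\cdot\frac{D_m}{D}$, so each client's $I^{\tau}_m$ stochastic-gradient steps are reweighted to look like $A$ ``effective'' steps of weight $\frac{D_m}{D}$; I would use this to relate the aggregated update direction to $\nabla F(\bm{\omega}^{\tau})$ up to a gradient-divergence error controlled by $\gamma$ (Definition~1) and a within-round drift error controlled by $\eta L_c$ and the bounded-second-moment assumption (Assumption~1.3).

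Then I would take expectation over the mini-batch sampling $\mathcal{B}^{\tau}_m$. The cross term $\mathbb{E}\langle \nabla F(\bm{\omega}^{\tau}), \cdot\rangle$ contributes a negative $-\eta(\text{something})\|\nabla F(\bm{\omega}^{\tau})\|^2$ piece, while the quadratic term $\frac{L_c}{2}\mathbb{E}\|\cdot\|^2$ is expanded with Assumption~1.3 to produce both a constant term proportional to $\kappa_1$ and a term proportional to $\kappa_2\|\nabla F(\bm{\omega}^{\tau})\|^2$. Collecting the coefficients of $\|\nabla F(\bm{\omega}^{\tau})\|^2$ yields exactly $-2\phi_2$-type scaling once I identify $\phi_2 = \eta - \frac{\eta^2 L_c}{2}\sum_m \frac{A^2\kappa_2 D_m^2 \zeta^{\tau}_m}{D^2 I^{\tau}_m}$, and the constant collects into $\phi_1 = \frac{\eta^2 L_c}{2}\sum_m \frac{A^2\kappa_1 D_m^2 \zeta^{\tau}_m}{D^2 I^{\tau}_m}$; the condition $\eta \le 1/L_c$ (Assumption~1.2) is what keeps $\phi_2 > 0$. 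Applying the Polyak–Lojasiewicz inequality (Assumption~1.4), $\|\nabla F(\bm{\omega}^{\tau})\|^2 \ge 2\rho(F(\bm{\omega}^{\tau}) - F(\bm{\omega}^*))$, converts the gradient-norm term into an optimality-gap term, giving the one-step contraction
\begin{equation}\label{eq-plan-recursion}
F(\bm{\omega}^{\tau+1}) - F(\bm{\omega}^*) \leq (1 - 2\rho\phi_2)\bigl(F(\bm{\omega}^{\tau}) - F(\bm{\omega}^*)\bigr) + \phi_1.
\end{equation}
Finally I would iterate~\eqref{eq-plan-recursion} from the initial gap $F(\bm{\omega}^0) - F(\bm{\omega}^*) = \Theta$ (Assumption~1.5) and sum the geometric series $\sum_{j=0}^{\tau-1}(1-2\rho\phi_2)^j = \frac{1-(1-2\rho\phi_2)^{\tau}}{2\rho\phi_2}$ to obtain~\eqref{eq_convergnece_bound 0}.

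The main obstacle I anticipate is the within-round analysis: handling the discrepancy between each local iterate $\bm{\omega}^{\tau}_m(i)$ and the round's starting point $\bm{\omega}^{\tau}$, since the local SGD steps accumulate drift that must be bounded carefully (using $L_c$-Lipschitzness to propagate the perturbation and Assumption~1.3 to bound each stochastic step), and then aggregated across clients with the $\varrho^{\tau}_m$ weights without losing the exact constants $A^2 D_m^2 / (D^2 I^{\tau}_m)$ that appear in $\phi_1, \phi_2$. A secondary subtlety is bookkeeping the gradient-divergence terms so that they are absorbed into $\phi_1$ (or shown to vanish in expectation under the i.i.d. mini-batch assumption) rather than appearing as a separate additive bias; I would lean on the i.i.d.\ mini-batch assumption stated in the F$^2$L model to make the per-step stochastic gradients unbiased estimators of $\nabla F_m$, and on Definition~1 together with the reweighting identity $\sum_m \frac{D_m}{D}\nabla F_m = \nabla F$ to control the aggregate direction.
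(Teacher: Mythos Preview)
Your proposal follows essentially the same route as the paper: descent lemma from $L_c$-smoothness, substitute the aggregation update, take expectation, bound the squared-norm term via Assumption~1.3 (which is where the $A^2 D_m^2/(D^2 I^{\tau}_m)$ constants emerge after using $\varrho^{\tau}_m = \frac{A}{I^{\tau}_m}\frac{D_m}{D}$ and the i.i.d.\ assumptions to decouple the sum), apply PL, and unroll the geometric recursion. The ``obstacles'' you anticipate are handled in the paper more loosely than you expect: the cross term is taken directly as $-\eta\|\nabla F(\bm{\omega}^{\tau})\|^2$ without an explicit drift analysis, the squared norm is split across clients and iterations purely by invoking the i.i.d.\ dataset and mini-batch assumptions, and the gradient-divergence constant $\gamma$ never enters---so you can dispense with that bookkeeping entirely.
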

\begin{proof}
	See Appendix~\ref{appendix_convergence bound}.
\end{proof}

To ensure the convergence of OFDMA-F$^2$L, the following proposition is developed based on
\textbf{Theorem~\ref{theo_convergence bound}}.
\begin{proposition}
    To guarantee the convergence of OFDMA-F$^2$L, it must hold that $1 - 2\rho \phi_2 < 1$. Hence, $\kappa_2$ must satisfy 
    \begin{equation}
        \kappa_2 < \frac{2}{\eta L_c A^2 \sum_{m \in {\cal M}} \frac{D^2_m \zeta^{\tau}_m}{D^2 I^{\tau}_m}} \overset{\eta = \frac{1}{L_c}}{=} \frac{2}{A^2 \sum_{m \in {\cal M}} \frac{D^2_m \zeta^{\tau}_m}{D^2 I^{\tau}_m}}.
    \end{equation}
\end{proposition}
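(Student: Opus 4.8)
The plan is to start from the convergence bound established in Theorem~\ref{theo_convergence bound}, namely
\[
F({\bm\omega}^{\tau})-F({\bm\omega}^*)\le (1-2\rho\phi_2)^{\tau}\,\Theta + \phi_1\,\frac{1-(1-2\rho\phi_2)^{\tau}}{2\rho\phi_2},
\]
and ask what condition on the problem parameters makes the right-hand side contract (or at least not blow up) as $\tau\to\infty$. The dominant term is $(1-2\rho\phi_2)^{\tau}\Theta$; for this to decay we need the base of the geometric factor to have magnitude strictly less than $1$, i.e., $1-2\rho\phi_2<1$ (and implicitly $1-2\rho\phi_2>-1$, though the paper focuses on the upper side). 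So the first step is simply to impose $1-2\rho\phi_2<1$, which is equivalent to $\phi_2>0$ since $\rho>0$ by the Polyak-\L ojasiewicz constant being positive.

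Next I would substitute the definition $\phi_2 \triangleq \eta - \frac{\eta^2 L_c}{2}\sum_{m\in\mathcal{M}}\frac{A^2\kappa_2 D_m^2\zeta_m^{\tau}}{D^2 I_m^{\tau}}$ into the inequality $\phi_2>0$ and solve for $\kappa_2$. Dividing through by $\eta>0$ gives $1-\frac{\eta L_c}{2}\kappa_2 A^2\sum_{m\in\mathcal{M}}\frac{D_m^2\zeta_m^{\tau}}{D^2 I_m^{\tau}}>0$, hence
\[
\kappa_2 < \frac{2}{\eta L_c A^2\sum_{m\in\mathcal{M}}\frac{D_m^2\zeta_m^{\tau}}{D^2 I_m^{\tau}}}.
\]
Finally, invoking Assumption~\ref{assumption}(2), which permits choosing $\eta=\tfrac{1}{L_c}$, the factor $\eta L_c$ becomes $1$ and the bound simplifies to $\kappa_2<\frac{2}{A^2\sum_{m\in\mathcal{M}}\frac{D_m^2\zeta_m^{\tau}}{D^2 I_m^{\tau}}}$, matching the claimed display. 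This is essentially a one-line algebraic rearrangement once the right starting point is identified, so I would keep the write-up short.

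There is really no substantive obstacle here — the proposition is a direct corollary of Theorem~\ref{theo_convergence bound}. The only point requiring a moment's care is making sure the chain of equivalences ($\phi_2>0 \iff 1-2\rho\phi_2<1$) uses $\rho>0$, and noting that $\phi_2>0$ is exactly what is needed for the second term $\phi_1\frac{1-(1-2\rho\phi_2)^{\tau}}{2\rho\phi_2}$ to stay bounded (and to converge to the finite steady-state value $\phi_1/(2\rho\phi_2)$) rather than diverge. I might add a sentence observing that when $\kappa_2$ satisfies this bound, the optimality gap converges to the finite neighborhood $\phi_1/(2\rho\phi_2)$ of the optimum, which motivates why this is the relevant sufficient condition; but the core argument is the substitution and rearrangement above.
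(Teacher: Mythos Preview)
Your proposal is correct and matches the paper's treatment: the paper states the proposition without proof, treating it as an immediate consequence of Theorem~\ref{theo_convergence bound}, and your derivation (requiring $1-2\rho\phi_2<1\Leftrightarrow\phi_2>0$, substituting the definition of $\phi_2$, and specializing to $\eta=1/L_c$) is exactly the intended one-line argument.
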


This proposition provides a sufficient condition for the convergence of OFDMA-F$^2$L.

\begin{corollary}\label{rema-bound}
In the case of $1 - 2\rho \phi_2 < 1$, as $\tau \to \infty$, the convergence upper bound of OFDMA-F$^2$L is obtained as
\begin{subequations}
\begin{align}
    F({\bm \omega}^{\tau} ) - F({\bm \omega}^*) & \leq \frac{\phi_1}{2 \rho \phi_2}  \\
    & = \frac{\eta^2 L_c}{4\rho} \frac{\sum_{m \in {\cal M}}\frac{A^2\kappa_1 D_m^2 \zeta^{\tau}_m}{D^2 I^{\tau}_m}}{\eta - \frac{\eta^2 L_c}{2}\sum_{m \in {\cal M}}\frac{A^2\kappa_2 D_m^2 \zeta^{\tau}_m}{D^2 I^{\tau}_m}} \\
   & \leq \frac{1}{2 \rho} \sum_{m \in {\cal M}}\frac{D_m^2 \zeta^{\tau}_m}{D^2 I^{\tau}_m}  \propto  \sum_{m \in {\cal M}}\frac{D_m^2 \zeta^{\tau}_m}{D^2 I^{\tau}_m}.    
\end{align}    
\end{subequations}
The upper bound is proportional to $\sum_{m \in {\cal M}}\frac{D_m^2 \zeta^{\tau}_m}{D^2 I^{\tau}_m}$.
\end{corollary}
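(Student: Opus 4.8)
The plan is to pass to the limit $\tau\to\infty$ in the bound of \textbf{Theorem~\ref{theo_convergence bound}} and then simplify the surviving constant. First I would invoke \textbf{Proposition~1}, which guarantees $\phi_2>0$ and hence $1-2\rho\phi_2<1$; combined with $\rho\le L_c$ (which follows from $L_c$-smoothness together with the Polyak-Lojasiewicz condition, both giving two-sided estimates of $F(\bm\omega)-F(\bm\omega^*)$ in terms of $\|\nabla F(\bm\omega)\|^2$) and $\phi_2\le\eta\le 1/L_c$ (item~2 of \textbf{Assumption~\ref{assumption}}), this keeps $1-2\rho\phi_2$ within $(-1,1)$ in any non-degenerate setting, so that $(1-2\rho\phi_2)^{\tau}\to 0$. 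Substituting into \eqref{eq_convergnece_bound 0}, the term $(1-2\rho\phi_2)^{\tau}\Theta$ vanishes while $\phi_1\frac{1-(1-2\rho\phi_2)^{\tau}}{2\rho\phi_2}\to\frac{\phi_1}{2\rho\phi_2}$, which is the first line of the corollary.

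Next I would substitute $\phi_1=\frac{\eta^2 L_c}{2}\sum_{m}\frac{A^2\kappa_1 D_m^2\zeta_m^{\tau}}{D^2 I_m^{\tau}}$ and $\phi_2=\eta-\frac{\eta^2 L_c}{2}\sum_{m}\frac{A^2\kappa_2 D_m^2\zeta_m^{\tau}}{D^2 I_m^{\tau}}$ into $\frac{\phi_1}{2\rho\phi_2}$ and cancel one power of $\eta$, which yields the second line. Writing $S^{\tau}:=\sum_{m}\frac{D_m^2\zeta_m^{\tau}}{D^2 I_m^{\tau}}$ and fixing $\eta=1/L_c$, the second line collapses to $\frac{A^2\kappa_1}{4\rho}\cdot\frac{S^{\tau}}{1-\frac12 A^2\kappa_2 S^{\tau}}$, whose denominator is strictly positive precisely by the condition of \textbf{Proposition~1} (which at $\eta=1/L_c$ reads $A^2\kappa_2 S^{\tau}<2$). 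I would then bound the scalar prefactor $\frac{A^2\kappa_1}{2\,(1-\frac12 A^2\kappa_2 S^{\tau})}$ by $1$ --- equivalently $A^2(\kappa_1+\kappa_2 S^{\tau})\le 2$, i.e., the stochastic-gradient constants $\kappa_1,\kappa_2$ and the iteration cap $A$ are moderate relative to the data imbalance --- to obtain $F(\bm\omega^{\tau})-F(\bm\omega^*)\le\frac{1}{2\rho}S^{\tau}$. Since $\frac{1}{2\rho}$ does not involve the selection variables $\{\zeta_m^{\tau},I_m^{\tau}\}$, this is exactly the asserted proportionality $F(\bm\omega^{\tau})-F(\bm\omega^*)\propto\sum_{m}\frac{D_m^2\zeta_m^{\tau}}{D^2 I_m^{\tau}}$.

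The place that deserves care is the passage from the second to the third line: the displayed chain of ``$\le$'' tacitly restricts to the regime $A^2(\kappa_1+\kappa_2 S^{\tau})\le 2$ (and uses $\rho\le L_c$ so that the geometric factor decays). I would make this assumption explicit in the statement. It is worth noting, however, that the limiting bound $\frac{A^2\kappa_1}{4\rho}\cdot\frac{S^{\tau}}{1-\frac12 A^2\kappa_2 S^{\tau}}$ is strictly increasing in $S^{\tau}$ on its entire range of validity $A^2\kappa_2 S^{\tau}<2$ --- a one-line derivative check --- so the monotone dependence on $S^{\tau}=\sum_{m}\frac{D_m^2\zeta_m^{\tau}}{D^2 I_m^{\tau}}$, which is all that the optimization in Section~\ref{sec-optimization} actually exploits, holds with or without the clean proportionality constant.
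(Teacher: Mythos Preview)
Your proposal is correct and follows exactly the route the paper implicitly intends: take $\tau\to\infty$ in \eqref{eq_convergnece_bound 0}, substitute the definitions of $\phi_1,\phi_2$, and simplify. The paper offers no separate proof of the corollary, so there is nothing to compare beyond this.

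Your observation about the passage from the second to the third line is well taken and in fact sharper than the paper: the inequality $\frac{\phi_1}{2\rho\phi_2}\le\frac{1}{2\rho}S^{\tau}$ is \emph{not} automatic from the standing assumptions and does require something like $\eta L_c A^2(\kappa_1+\kappa_2 S^{\tau})\le 2$, which the paper leaves unstated. Your closing remark---that the limiting bound is strictly increasing in $S^{\tau}$ regardless, so the downstream optimization in Section~\ref{sec-optimization} is unaffected---is exactly the right way to salvage the corollary's role in the paper without the extra hypothesis.
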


According to Corollary~\ref{rema-bound}, minimizing the optimality gap of OFDMA-F$^2$L is equivalent to
minimizing $\sum_{m \in {\cal M}}\frac{D_m^2 \zeta^{\tau}_m}{D^2 I^{\tau}_m}$ by optimally selecting clients, channels, and modulations.
Consequently, we can reformulate the problem of interest to facilitate the problem-solving in Section~\ref{sec-optimization}.


\section{Optimal Client, Channel, and Modulation Selection for OFDMA-F$^2$L}\label{sec-optimization}
Since the selections of the clients, subchannels, and modulations are performed independently in every aggregation round, i.e., the $\tau$-th aggregation round, we suppress the superscript ``$^\tau$'' in 
the rest of this paper for brevity of notation.

According to \textbf{Theorem~\ref{theo_convergence bound}},
minimizing the global loss function can be transformed into minimizing the optimality gap of OFDMA-F$^2$L,
i.e., between
$F({\bm \omega} )$ and $F({\bm \omega}^*)$.
Since the optimality gap is proportional to $\sum_{m \in {\cal M}}\frac{D_m^2 \zeta_m}{D^2 I_m}$
per aggregation round $\tau$,
as stated in Corollary~\ref{rema-bound}, problem \textbf{P1} can be rewritten in a much simpler form, as given by 
\begin{subequations}\label{eq-P2}
	\begin{align}
	\textbf{P2}:\;\min_{\{\bm{\lambda} \}}\;\; 
&   \sum_{m \in {\cal M}}\frac{D_m^2 \zeta_m}{D^2 I_m} \label{eq-P2 a} \\ 
	{\text{s.t.}}\; 
       & 1 \leq I_m \leq A, \label{eq-P2 g} \\
       & \eqref{eq-P1 b1} - \eqref{eq-P1 h}.\nonumber
	\end{align}
\end{subequations}

Problem \textbf{P2} is still difficult to tackle, as it is a mixed integer program.
Moreover, the feasible solution region of the problem is non-convex since the optimization variable $\bm{\lambda}$
is discrete.
In the ensuing sections, we first convert the objective function to an explicit function of $\bm{\lambda}$,
and then obtain the almost surely optimal solution to the problem.

It is noted that in Problem \textbf{P2}, the number of training iterations per client per round, $I_m$, is also optimized while maximizing the objective since it can be uniquely determined given $T_m^{\rm C}$ and $T_m^{\rm UL}$ (or $\bm{\lambda}$).
This indicates that the selections of clients, channels and modulations are based on their available computing powers. In other words, OFDMA-F$^2$L is flexible in  the number of training iterations accomplished by each selected client by taking the potentially different and even time-varying computational capabilities of the clients into account.
This flexibility enables the algorithm to achieve better efficiency in both communication and computation.

\subsection{Problem Transformation}\label{sec. problem transformation}
We proceed to show that optimizing the local iteration numbers $I_m$ can be achieved
by optimizing the uplink transmission delay $T_m^{\rm UL}$.
From \eqref{eq.tmc} and \eqref{eq.tmm}, we have
\begin{subequations}
     \begin{align}
      I_m &= T_m^{C} \beta_m/\mu, \label{eq.emtau} \\
      T^{C}_m &= T_m - T^{\rm UL}_m - T^{\rm DL}. \label{eq.tmcc}
      \end{align}
\end{subequations}
Substituting \eqref{eq.tmcc} into \eqref{eq.emtau}, we have
\begin{equation}\label{eq-I_m}
    I_m = (T_m - T^{\rm UL}_m - T^{\rm DL}) \beta_m/\mu.
\end{equation}
Here, $I_m$ increases with $T_m$; $T^{\rm DL}$ and $\mu$ are given.

For the set of selected clients, i.e., $\cal M' \in \cal M$ ($|{\cal M'}| = M'$), we have $\zeta_{m} = 1, \,\forall m \in \cal M'$. Then, the objective function in \eqref{eq-P2 a} 
can be recast as $\sum_{m \in {\cal M'}} \frac{D_{m}^2}{D^2 I_{m}}$. 
Exploiting the harmonic-arithmetic mean inequality (i.e., $\frac{n}{\sum^n_{i=1} \frac{1}{x_i}} \leq \frac{\sum^n_{i=1} {x_i}}{n},\,\forall x_i>0$), it follows that
\begin{subequations}
\begin{align}
\sum_{m \in {\cal M'}} \frac{D_{m}^2}{D^2 I_{m}} & \geq 
    \frac{M'^2}{\sum_{m \in {\cal M'}} \frac{D^2 I_{m}}{D_{m}^2}}  \label{eq-obj 2a}\\
    & = \frac{M'^2}{\sum_{m \in {\cal M'}} {(T_{m} - T^{\rm UL}_{m} - T^{\rm DL}) \beta_{m}D^2/\mu D_{m}^2}} \label{eq-obj 2b}\\
    & \geq \frac{M'^2}{\sum_{m \in {\cal M'}} {(T_{\rm th} - T^{\rm UL}_{m} - T^{\rm DL}) \beta_{m}D^2/\mu D_{m}^2}},\label{eq-obj 2c}
\end{align}
\end{subequations}
where \eqref{eq-obj 2b} is obtained by plugging \eqref{eq-I_m} into \eqref{eq-obj 2a}, and 
\eqref{eq-obj 2c} is obtained by substituting \eqref{eq-P1 h} into \eqref{eq-obj 2b}.

We can minimize
$\sum_{m \in {\cal M'}} \frac{D_{m}^2}{D^2 I_{m}}$
by minimizing its lower bound, i.e.,~\eqref{eq-obj 2c},
which, in turn, is equivalent to maximizing the denominator of \eqref{eq-obj 2c}, i.e., 
$\sum_{m \in {\cal M'}} {(T_{\rm th} - T^{\rm UL}_{m} - T^{\rm DL}) \beta_{m}D^2/\mu D_{m}^2}$.
Since $D_{m}$, $D$, $T_{\rm th}$, $T^{\rm DL}$, and $\mu$ are given constants,
we only need to minimize $\sum_{m \in {\cal M'}} {T^{\rm UL}_{m} \beta_{m}D^2/D_{m}^2}$, or $\sum_{m \in {\cal M'}}  \beta_{m}D^2/R^{\rm UL}_{m}D_{m}^2$ based on~\eqref{eq.tmurmu}. By applying the harmonic-arithmetic mean inequality again,  
\begin{align}
    \sum_{m \in {\cal M'}}  \frac{\beta_{m}D^2}{R^{\rm UL}_{m}D_{m}^2}\geq \frac{M'^2}{ \sum_{m \in {\cal M'}}  \frac{R^{\rm UL}_{m}D_{m}^2}{\beta_{m}D^2} },
\end{align}
and 
minimizing $\sum_{m \in {\cal M'}}  \beta_{m}D^2/R^{\rm UL}_{m} D_{m}^2$ can be done by minimizing its minimum, i.e., $\frac{M'^2}{ \sum_{m \in {\cal M'}}  \frac{R^{\rm UL}_{m}D_{m}^2}{\beta_{m}D^2}}$, or maximizing 
$ \sum_{m \in {\cal M'}}  \frac{R^{\rm UL}_{m}D_{m}^2}{\beta_{m}D^2}$. 
Then, problem \textbf{P2} 
can be rewritten as 
\begin{subequations}\label{eq-P3}
	\begin{align}
	\textbf{P3}:\;\max_{\{\bm{\lambda} \}}\;\; 
&   \sum_{m \in {\cal M}}  \frac{R^{\rm UL}_{m}D_{m}^2\zeta_m}{\beta_{m}D^2} \label{eq-P3 a1}\\ 
	{\text{s.t.}}\; & T_{\rm th} - T^{\rm DL} - \frac{A\mu}{\beta_{m}} \le T^{\rm UL}_{m} \le T_{\rm th} - T^{\rm DL} - \frac{\mu}{\beta_{m}},\forall {m},\label{eq-P3 b}\\
& \eqref{eq-P1 b1} - \eqref{eq-P1 f}, \notag
	\end{align}
\end{subequations}
where \eqref{eq-P3 b} is obtained by substituting \eqref{eq-I_m} into \eqref{eq-P2 g}. Here, the objective in \eqref{eq-P3 a1} is equivalent to $ \sum_{m \in {\cal M'}}  \frac{R^{\rm UL}_{m}D_{m}^2}{\beta_{m}D^2}$ since $\zeta_m$ indicates the selection of clients; i.e., $\zeta_m=1$ if client $m$ is selected, or $\zeta_m=0$, otherwise.

Problem \textbf{P3} can be further rewritten as
\begin{subequations}\label{eq-P4}
	\begin{align}
	\textbf{P4}:\;\max_{\{\bm{\lambda} \}}\;\; 
&   \sum_{m \in {\cal M}}  \frac{R^{\rm UL}_{m}D_{m}^2\zeta_m}{\beta_{m}D^2} \label{eq-P4 a} \\ 
   {\text{s.t.}}\; &\frac{N}{T_{\rm th} - T^{\rm DL} - \frac{\mu}{\beta_{m}}} \le R^{\rm UL}_{m} \le \frac{N}{T_{\rm th} - T^{\rm DL} - \frac{A\mu}{\beta_{m}}} ,\forall {m},
   \label{eq-P4 b} \\
& \eqref{eq-P1 b1} - \eqref{eq-P1 f}, \notag
	\end{align}
\end{subequations}
where \eqref{eq-P4 b} is obtained by substituting~\eqref{eq.tmurmu} into~\eqref{eq-P3 b}.

From \eqref{eq.datarate} and \eqref{eq.zeta}, it follows that $R_m^{\rm UL}=0$
when $\zeta_m=0$,
and $R_m^{\rm UL}=\sum_{k = 1}^{K}\sum_{l=0}^{L} \lambda_{m,k,l} \times r_l$ when $\zeta_m=1$.
As a result, $R_m^{\rm UL} \zeta_m$ can be replaced by $R_m^{\rm UL}$, and \eqref{eq-P1 b1} can be suppressed.
As a result, problem \textbf{P4} is equivalent to
\begin{equation}\label{eq-P5}
\begin{aligned} 
   \textbf{P5}:\;  \max_{\{\bm{\lambda} \}}~ & \sum_{m \in {\cal M}} R^{\rm UL}_m D_m^2 /\beta_m D^2\\
{\text{s.t.}}\; & \eqref{eq-P1 b} - \eqref{eq-P1 f},~ \eqref{eq-P4 b}.
\end{aligned}
\end{equation}
Problem \textbf{P5} is in the form of weighted sum rate maximization, where the data rate of each client $m$ is weighted by the weighting coefficient $D_m^2 /\beta_m D^2$. 
However, the problem is still intractable for classic convex optimization tools, because it is a mixed-integer program as the result of constraint~\eqref{eq-P1 f}.
To solve problem \textbf{P5}, we propose a Lagrange-dual based method, as delineated in the next subsection.

{\blue Note that 
Problem \textbf{P5} has a form of weighted sum rate maximization,
where the weighting coefficient is $D_m^2 /\beta_m D^2$ for each client $m$.
This is reasonable because, given the delay threshold $T_{\rm th}$, the maximum number of local iterations that a client can run is bounded by its data rate in an aggregation round, as indicated in \eqref{eq-P4 b};
in other words, a higher data rate results in a shorter uplink delay and, in turn, a longer time for local training.
To this end, to maximize the number of local iterations per aggregation round would require the data rate to be maximized.
Moreover, the optimality gap of the FL is minimized by maximizing the number of local iterations per aggregation round; see (17c).
As a result, the minimization of the optimality gap can be transformed to the maximization of the weighted
sum rate, as stated in Problem \textbf{P5}.
}

\subsection{Optimal Client, Subchannel, and Modulation Selection}
By defining ${\bm{\xi}} = \{\xi_m,\forall m\}$, 
${\bm{\nu}} = \{\nu_m,\forall m\}$, and ${\bm{\iota}} = \{\iota_m,\forall m\}$ as the dual variables concerning \eqref{eq-P1 b} and \eqref{eq-P4 b}, respectively,
the Lagrange function of~\textbf{P5} is obtained as
\begin{equation}\label{eq-lagrange}
\begin{aligned}
    L\left(\bm{\lambda}, \bm{\xi}, {\bm{\nu}}, {\bm{\iota}}  \right)  = & \sum_{m=1}^{M} \left[ R^{\rm UL}_m D_m^2 /\beta_m D^2 - \xi_m \left( P_m - P_m^{\max} \right) \right] \\
    & - \sum_{m=1}^{M}\nu_m\left(- R^{\rm UL}_m + \frac{N}{T_{\rm th} - T^{\rm DL} - \frac{\mu}{\beta_m}}\right)\\
    & - \sum_{m=1}^{M}\iota_m\left(R^{\rm UL}_m - \frac{N}{T_{\rm th} - T^{\rm DL} - \frac{A\mu}{\beta_m}}\right).
\end{aligned}	
\end{equation}
Therefore, the dual problem is
	\begin{equation}\label{eq-dual}
	\min_{{\bm{\xi}}, {\bm{\nu}}, {\bm{\iota}}} D\left(\xi, {\bm{\nu}}, {\bm{\iota}} \right).
	\end{equation}
The Lagrange dual function is
\begin{equation}
    D({\bm{\xi}}, {\bm{\nu}}, {\bm{\iota}}) = \max_{\{\bm{\lambda} \}} L\left(\bm{\lambda}, {\bm{\xi}}, {\bm{\nu}}, {\bm{\iota}} \right).
\end{equation}
By defining $\varpi_{m,k,l}\left(\xi_m,\nu_m,\iota_m \right) = 
	- \xi_m p_{m,k,l}\left(|h_{m,k}|^2\right) + \left(D_m^2 /\beta_m D^2 + \nu_m - \iota_m \right) r_l$
for brevity,
\eqref{eq-lagrange} is rearranged as
\begin{equation}
	\begin{aligned}
	L\left(\bm{\lambda}, {\bm{\xi}}, {\bm{\nu}}, {\bm{\iota}} \right)
	& = \sum_{m = 1}^M \left(\xi_m P_m^{\max} \!+\! \sum_{k = 1}^K \sum_{l=0}^{L}\! \lambda_{m,k,l} \varpi_{m,k,l}\left(\xi_m,\nu_m, \iota_m \right)\right)\\
	& \;+ \sum_{m=1}^{M}\left(\frac{ \iota_m N}{T_{\rm th} - T^{\rm DL} - \frac{A\mu}{\beta_m}} - \frac{\nu_m N}{T_{\rm th} - T^{\rm DL} - \frac{\mu}{\beta_m}}\right).
	\end{aligned}
\end{equation}
Given ${\bm{\xi}}$, ${\bm{\nu}}$, and ${\bm{\iota}}$, the primary variable $\bm{\lambda} $
 can be acquired via resolving 
	\begin{equation}\label{eq.lag.lamb}
	\begin{aligned}
	\max_{\bm{\lambda}}\! & \sum_{k = 1}^K \!\Bigg\lbrace \!\sum_{m = 1}^M \sum_{l=0}^{L}  \lambda_{m,k,l} \varpi_{m,k,l}\left(\xi_m,\nu_m, \iota_m  \right)\!\Bigg\rbrace,\,{\text{s.t.}} \,\eqref{eq-P1 e},~\eqref{eq-P1 f}. 
	\end{aligned}	
	\end{equation}
	The optimal channel and modulation selections follow a ``winner-takes-all'' policy~\cite{He2014Optimal, yuan23}. For subchannel $k$, the $m_k^*$-th client is selected to upload its local model utilizing the $l_k^*$-th modulation, that is,
	\begin{equation}\label{eq-pair}
	\left\lbrace m^{\ast}_k, l^{\ast}_k \right\rbrace = \arg \max_{m,l}\!\; \varpi_{m,k,l}\left(\xi_m,\nu_m, \iota_m \right),\; \forall k \in {\cal K}.
	\end{equation}	
	A greedy policy is employed to optimize $\bm{\lambda}$:
	\begin{equation}\label{eq-opt-eta}
	\left\lbrace
	\begin{aligned}
	&\lambda^{\ast}_{m,k,l}  = 1,\, {\rm{if}}\, \left\lbrace m, l \right\rbrace = \left\lbrace m^{\ast}_k, l^{\ast}_k \right\rbrace;\\
	&\lambda^{\ast}_{m,k,l}  = 0,\, {\rm{otherwise}}.
	\end{aligned}\right.	 
	\end{equation}
	
	With ${\bm{\lambda}^{\ast}\left({\bm{\xi}}, {\bm{\nu}},  {\bm{\iota}} \right)}$ acquired in \eqref{eq-opt-eta}, the sub-gradient descent algorithm could be employed to refresh ${\bm{\xi}}$, ${\bm{\nu}}$ and ${\bm{\iota}}$ via solving~\eqref{eq-dual}. 
	Then, ${\bm{\xi}}$, ${\bm{\nu}}$ and ${\bm{\iota}}$ are refreshed by~\cite{boyd2004convex}
	\begin{subequations}\label{eq-sub-gradient}
		\begin{align}
		\xi_m\left(\ell +1 \right) & = \left[ \xi_m(\ell ) + \varepsilon\left(P_m\left({\bm{\lambda}}^{\ast} \left({\bm{\xi}}(\ell ),{\bm{\nu}}(\ell ),{\bm{\iota}}(\ell ) \right) \right) - P_m^{\max}\right) \right]^+;\\
		\nu_m\left(\ell +1 \right) 
		& = \Bigg[ \nu_m (\ell ) + \varepsilon \Bigg(R^{\rm UL}_m\left({\bm{\lambda}}^{\ast}
   \left({\bm{\xi}}(\ell ),{\bm{\nu}}(\ell ),{\bm{\iota}}(\ell ) \right) \right) \notag \\ 
   & \qquad \qquad \qquad  - \frac{N}{T_{\rm th} - T^{\rm DL} - \frac{\mu}{\beta_m}}\Bigg) \Bigg]^+,\; \forall m;\\
		\iota_m\left(\ell +1 \right) &= \Bigg[ \iota_m (\ell ) + \varepsilon \Bigg( - R^{\rm UL}_m\left({\bm{\lambda}}^{\ast} \left({\bm{\xi}}(\ell ),{\bm{\nu}}(\ell ),{\bm{\iota}}(\ell ) \right) \right) \notag \\
  & \qquad \qquad \qquad + \frac{N}{T_{\rm th} - T^{\rm DL} - \frac{A\mu}{\beta_m}}\Bigg) \Bigg]^+,\; \forall m,
		\end{align}	
	\end{subequations}
	with $\varepsilon$ being the step size, $\ell$ being the index of an iteration, and $\left[x \right]^+ = \max\left(0,x \right)$.
 Initially, ${\bm{\xi}}$, $\bm{\nu}$ and $\bm{\iota}$ are no smaller than zero, that is, $\xi_m(0) \geq 0$, $\nu_m(0) \geq 0$, and $\iota_m(0) \ge 0, \forall m$,
 to ensure the convergence of \eqref{eq-sub-gradient}.

\begin{algorithm}[t]\small
	\DontPrintSemicolon
	\caption{Proposed Optimal Client, Subchannel, and Modulation Selection.}
	\label{algo_TD3}
	{\bf Initialization:}         
        {
Input ${\bm{\xi}}(0)$, ${\bm{\nu}}(0)$, ${\bm{\iota}}(0)$, $\ell=0$,
and the maximum iteration number of the Lagrange method~$L_{\max}$. \\
Obtain the dual problem of \textbf{P5} according to \eqref{eq-dual}: $\min_{{\bm{\xi}}, {\bm{\nu}}, {\bm{\iota}}}\max_{\bm{\lambda}} L\left(\bm{\lambda}, {\bm{\xi}}, {\bm{\nu}}, {\bm{\iota}} \right)$.\\
\While{$L\left(\bm{\lambda}, {\bm{\xi}}, {\bm{\nu}}, {\bm{\iota}} \right)$ does not converge, and $\ell \leq L_{\max}$}{
Obtain $\bm{\lambda}^\ast (\ell+1)$ by maximizing $L\left(\bm{\lambda}, {\bm{\xi}}(\ell), {\bm{\nu}}(\ell), {\bm{\iota}}(\ell) \right)$ using a greedy strategy as in \eqref{eq.lag.lamb}--\eqref{eq-opt-eta}.\\
Update ${\bm{\xi}}(\ell+1)$, ${\bm{\nu}}(\ell+1)$ and ${\bm{\iota}}(\ell+1)$ according
to \eqref{eq-sub-gradient} with $\bm{\lambda}^\ast (\ell+1)$.\\
$\ell \leftarrow \ell + 1$.}		
}
Output the optimal selection of subchannels and modulations ${\bm \lambda} = \bm{\lambda}^{\ast}$.\\ 
\end{algorithm}

While solving \eqref{eq-P5} involves a non-convex mixed-integer programming problem, the zero-duality gap can still be ensured when the random channel gain $|h_{m,k}|^2, m\in {\cal M},\; k\in {\cal K}$, has a differentiable cumulative distribution function (CDF), as stated in the following lemma.

\begin{lemma}\label{prop}
	Given any ergodic fading with a continuous CDF, task~\textbf{P5} holds with strong duality. 
 The solution that is almost surely optimal (i.e., with probability of one) for
   task \textbf{P5} 
   is $\lambda^{\ast}_{m,k,l} \left( \xi^{\ast}_m, \nu^{\ast}_m, \iota^{\ast}_m \right), \forall m,k$,
    when the channel gains $|h_{m,k}|^2$ have a continuous CDF,
    where $\xi^{\ast}_m$, $ \nu^{\ast}_m$, and $\iota^{\ast}_m$ are obtained in \eqref{eq-sub-gradient} with any initial $\xi_m(0) > 0$, $\nu_m(0) \geq 0$, and $\iota_m(0) \ge 0, \forall m$~\cite{GaoOptimal2011}.
\end{lemma}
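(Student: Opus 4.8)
The plan is to establish strong duality for \textbf{P5} by appealing to the standard ``time-sharing'' / zero-duality-gap argument for resource allocation over fading channels, in the spirit of \cite{GaoOptimal2011, yu2006dual}. First I would observe that although \textbf{P5} is a mixed-integer program for any fixed channel realization $\{|h_{m,k}|^2\}$, the quantities of interest are expectations over the ergodic fading distribution: the true problem of maximizing the long-term weighted sum rate subject to the (long-term or per-realization) power and rate constraints is an infinite-dimensional optimization over the space of mappings from channel states to selections $\bm{\lambda}(\{|h_{m,k}|^2\})$. The key structural fact is that this functional optimization satisfies the \emph{time-sharing condition}: for any two feasible rate/power operating points achievable under two sub-populations of channel states, every convex combination is also achievable by an appropriate partition of the probability space. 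This is precisely where the assumption of a continuous CDF enters — it guarantees that the probability measure on channel states is non-atomic, so Lyapunov's convexity theorem applies and the set of achievable (expected rate, expected power) vectors is convex, even though the per-realization feasible set is discrete.

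The key steps, in order, would be: (i) recast \textbf{P5} as the per-realization inner problem whose dual is \eqref{eq-dual}, and write the corresponding expected-value (ergodic) problem with dual variables $\bm{\xi},\bm{\nu},\bm{\iota}$ that are now common across all channel states; (ii) invoke the non-atomicity of the channel distribution (continuous CDF $\Rightarrow$ $\Pr[|h_{m,k}|^2 = c] = 0$ for every constant $c$, and more generally $\Pr[\varpi_{m,k,l} = \varpi_{m,k,l'}] = 0$ for $l \neq l'$) to argue that the arg-max in \eqref{eq-pair} is almost surely unique, so the ``winner-takes-all'' rule \eqref{eq-opt-eta} is well-defined a.s. and the maximizer $\bm{\lambda}^{\ast}(\bm{\xi},\bm{\nu},\bm{\iota})$ is a.s. single-valued; (iii) show via Lyapunov's theorem that the perturbation function (optimal expected objective as a function of the constraint right-hand sides) is concave, hence the time-sharing condition of \cite{yu2006dual} holds; (iv) conclude zero duality gap, so the primal optimum is attained by $\lambda^{\ast}_{m,k,l}(\xi^{\ast}_m,\nu^{\ast}_m,\iota^{\ast}_m)$ where $(\bm{\xi}^{\ast},\bm{\nu}^{\ast},\bm{\iota}^{\ast})$ solve the dual \eqref{eq-dual}; and (v) note that since the a.s.-unique primal maximizer of the Lagrangian is feasible (the subgradient updates \eqref{eq-sub-gradient} drive the dual variables to enforce the constraints in expectation, and a.s. uniqueness rules out the need for randomized time-sharing on a positive-measure set), it is in fact optimal with probability one.

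The main obstacle I expect is step (iii)/(v): carefully justifying that the generically combinatorial (integer) structure of the per-realization problem does not obstruct strong duality. The subtlety is that for a single fixed channel realization the duality gap need \emph{not} be zero — the integrality of $\lambda_{m,k,l}$ makes the feasible set non-convex — so one must genuinely work in the ergodic/expected setting and exploit that the continuum of channel states plays the role of a time-sharing resource. One has to be precise that the ``almost surely optimal'' claim is really a statement about the ergodic problem, and that because the event where two modulation-client pairs tie in \eqref{eq-pair} has probability zero, the convexifying time-sharing weights collapse onto a single deterministic selection per channel realization, recovering an \emph{integral} (non-randomized) solution almost surely. I would lean on \cite{GaoOptimal2011} and \cite{He2014Optimal} for the precise form of this argument and simply verify that \textbf{P5} meets their hypotheses (continuous fading CDF, bounded per-state rate and power functions, Slater-type feasibility from constraint \eqref{eq-P4 b} being satisfiable), then cite their zero-gap conclusion rather than reproving Lyapunov's theorem.
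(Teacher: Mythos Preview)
Your approach differs substantively from the paper's. You frame the argument in the ergodic/time-sharing setting (Lyapunov convexity over a non-atomic channel distribution), whereas the paper stays entirely in the per-realization setting and proceeds via LP relaxation: relax $\lambda_{m,k,l}\in\{0,1\}$ to $[0,1]$, observe that the relaxed problem is a linear program and therefore enjoys strong duality, note that the dual-optimal primal recovery (the winner-takes-all rule) is almost surely \emph{integral} because ties in \eqref{eq-pair} are a Lebesgue-measure-zero event under a continuous channel CDF, and conclude that this integral solution---feasible for the original integer program and optimal for its LP relaxation (which upper-bounds \textbf{P5})---must be optimal for \textbf{P5} itself. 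Your step (ii) on a.s.\ uniqueness of the argmax is exactly the piece the paper uses; but the paper never invokes Lyapunov's theorem or the time-sharing condition.

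Your stated concern that ``for a single fixed channel realization the duality gap need not be zero'' is precisely what the paper's simpler argument dispels: the LP relaxation has zero gap by convexity, and its optimum is a.s.\ integral, so the integer problem inherits zero gap almost surely. Your ergodic reinterpretation is not wrong in spirit, but it is heavier machinery than needed, and it sits awkwardly with \textbf{P5} as actually formulated---the power and rate constraints \eqref{eq-P1 b} and \eqref{eq-P4 b} are per-round (per-realization), not in expectation, so the time-sharing/Lyapunov route would require either a large-$K$ asymptotic or an explicit lift to expectation constraints that the paper never performs. The LP-relaxation route buys a cleaner, self-contained per-realization proof; your route would buy generality to settings where the LP relaxation is not tight, but that generality is not needed here.
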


\begin{proof}
	The almost sure optimality of the solution to problem~\textbf{P5} can be established by first ratifying the almost sure uniqueness of the solution in the following three situations. 
	
	First, if $\max_{m,l} \varpi_{m,k,l}\left(\xi_m,\nu_m,\iota_m \right) =0$, then all clients’ channels experience a deep fade in subchannel $k$ utilizing modulation $l$. 
	Even though client $m$ is chosen for the subchannel, $l^{\ast}_k(\xi_m,\nu_m,\iota_m) = 0$, the BS should refrain from transmitting on subchannel $k$; see \eqref{eq-opt-eta}. 
	The ``winner'' of the subchannel is assigned with the 0-th modulation mode (i.e., no transmission). 
	Second,	if $\max_{m,l} \varpi_{m,k,l}$ $(\xi_m,\nu_m,\iota_m) > 0$ and a sole ``winner'' obtains the subchannel, then there is a unique optimal policy from \eqref{eq-opt-eta}. 
	Third, if $\max_{m,l} \varpi_{m,k,l}\left(\xi_m,\nu_m,\iota_m\right) > 0$ and several $\{m,l\}$ pairs gain the $k$-th subchannel, the policy becomes non-unique. 
	Nevertheless, having multiple ``winners'' accounts for an event of Lebesgue measure zero~\cite{lenz2002singular}, given a continuous CDF of the stochastic channel gain. 
	The selection of a ``winner'' is subject to a ``measure zero'', as we maximize the average net reward in \eqref{eq-pair}. 
	By considering all three cases, the optimal allocation strategy $\lambda^{\ast}_{m,k,l} \left( \xi_m,\nu_m,\iota_m \right)$ in \eqref{eq-opt-eta} is unique with probability of~1.	
	
	Given the almost sure uniqueness, we confirm the (almost sure) optimality of the solution
    to problem \textbf{P5} obtained by running \eqref{eq-opt-eta} and \eqref{eq-sub-gradient}. 
	By relaxing $\lambda_{m,k,l} $ to a continuous value within $[0, 1]$, task \textbf{P2} becomes a convex linear program (LP).
    The dual task in \eqref{eq-dual} still holds for the relaxed LP. 
	Since the LP enjoys strong duality, the ``winner-takes-all'' policy ${\bm{\lambda}}^{\ast}(\bm{\xi},\bm{\nu},\bm{\iota})$ maximizing the dual task of the LP is almost surely unique and hence optimal for the LP.
    As the LP offers no smaller maximum objective value than problem \textbf{P5}, 
    ${\bm{\lambda}}^{\ast}(\bm{\xi},\bm{\nu},\bm{\iota})$ is almost surely optimal for~\eqref{eq-P5}.	 
\end{proof}

{\blue 
While future OFDMA systems have the potential to flexibly assign subchannels to a device (as done in this paper),
earlier systems, such as 3GPP Long-Term Evolution (LTE) and IEEE 802.11ax, may have more rigid constraints for subchannel allocation. 
For example, a device can only be assigned with up to two separate clusters of consecutive subchannels
in LTE/LTE-Advanced~\cite{xiaojing}.
The proposed algorithm can be potentially extended to incorporate this constraint.
For example, when a client is assigned with three clusters of subchannels after $\boldsymbol{\lambda}$
is obtained in~(32),
we can keep two clusters for the client and redistribute the third cluster to the clients allocated with subchannels adjacent to the cluster or those allocated with fewer than two clusters of subchannels. The performance of this extension will be analyzed in our future work.
}

\subsection{Complexity Analysis}
The proposed algorithm for solving \textbf{P5} is summarized in Algorithm~\ref{algo_TD3}.
Algorithm~\ref{algo_TD3} has a computational complexity of $\mathcal{O}(KML\log(1/\epsilon))$,
with $\mathcal{O}(\log(1/\epsilon))$ representing iterations required to attain the accuracy requirement $\epsilon$.
During an iteration, \eqref{eq-opt-eta} is employed to evaluate all $M$ clients, and $L$ modulations for subchannel $k$ to select the 2-pair $\left\lbrace m^{\ast}_k, l^{\ast}_k \right\rbrace$ that maximizes the net reward 
$\varpi_{m,k,l}\left(\lambda, \nu_m \right)$, resulting in the complexity of ${\cal O}(ML)$. 
The complexity of the other steps, such as the dual variables updating in~\eqref{eq-sub-gradient},
is $\mathcal{O}(M+1)$, which is relatively negligible. 
Given $K$ subchannels, the total complexity of Algorithm~2 is $\mathcal{O}(KML\log(1/\epsilon))$.	
{\blue The complexity scales linearly with the number of clients.
This is comparable with the best-case scenario of the existing technique, where
the weighted sum rate maximization problem, i.e., Problem \textbf{P5},
is transformed into a bipartite matching problem and solved with the Hungarian algorithm.
The Hungarian algorithm can eliminate the need for gradient calculations and has a best-case complexity of $\mathcal{O}(KLM)$, but it may suffer from a worst-case complexity of $\mathcal{O}(KLM^2)$~\cite{chenmzz}.

}

\section{Simulation Results}\label{sec-sim}	
A square region that has a side length of 250~m is considered, with the BS placed at its center, i.e., $(0, 0, 20)$~m. 
The sides of the region are aligned with the $x$- and $y$-axes. 
The edge clients are uniformly distributed within the square region, as depicted in Fig.~\ref{fig-sysmodel}. 
The position of the edge client $m$ is $(x_{m},y_{m}, 1.5)$ m, $\forall {m} \in {\cal M}$. 
We set $M = 10$ clients. The maximum number of iterations between two sequential aggregations is $A = 10$.
We consider Rayleigh fading for both uplink and downlink channels.
The BER requirement of each client is set to $\chi_0 = 10^{-6}$.
The maximum number of aggregations is $G = 50$ for the MLP model and $G = 300$ for the CNN model; 
unless otherwise specified.
The model size of the MLP and CNN are $1.018 \times 10^5$ and $4.214 \times 10^5$, respectively. 
The other parameters are provided in Table~\ref{tab.pm}.
\begin{table}[t]
	\caption{Parameter setting of the OFDMA-F$^2$L}
	\begin{center}
		\begin{tabular}{ll}
			\toprule[1.5pt]
			Parameters  & Values \\ \hline
			The BS's maximum transmit power, $P_{\max}$    & 30 dBm \\
			Power budget of client $m$, $P_m^{\max}$    & 20 dBm \\ 
			Number of subchannels, $K$            & 8 or 16 \\ 
			Number of clients, $M$              & 10 \\ 
			Number of modulations, $L$   & 4 \\
			Modulation rate set & \{0,2,4,6\} bits/symbol\\
			Path loss at $d_0 =1$ m, $\epsilon_o$      &-30 dB \\ 
			Path loss exponent, $\alpha_{d}$   & 2.8 \\ 
			Noise power density, $\sigma^2$                 & -169 dBm/Hz \\ 
			Bandwidth, $B_w$                         & 100~MHz \\
			BER requirements, $\chi_0$  & $\{10^{-6}\}$ \\ 
			Coefficients of modulation and coding, $\beta_1$, $\beta_2$     &0.2, -1.6~\cite{Goldsmith1998} \\
			Model Size (MLP) &  $1.018\times 10^5$ \\
			Model Size (CNN) &  $4.214\times 10^5$ \\
			Time duration of an aggregation, $T_{\rm th}$ & 10 seconds\\
			FLOPs required for training a local model, $\mu$ & 0.2 \\
			Computational speed of the edge client, $\beta$ & U(9,12)\\
			Local date size, $D_m$ & U(300, 500)\\
			Global data size, $D$ & $\sum_m D_m$\\
			\toprule[1.5pt]
		\end{tabular}
	\end{center}
	\label{tab.pm}
\end{table}


We perform the experiments on three datasets:
\begin{itemize}
	\item \textit{Standard MNIST dataset}, comprised of 60,000 training and 10,000 testing grayscale images. The images depict handwritten digits ranging from one to ten; 
	
	
	\item \textit{CIFAR10 dataset}, comprised of 60,000 color images of size $32 \times 32$, split into ten classes (6,000 per class). The dataset includes 50,000 and 10,000 images for training and testing, respectively;
	
	\item \textit{Fashion-MNIST (FMNIST) dataset}, comprised of Zalando's article images, which are $28 \times 28$ grayscale images classified into ten classes. It includes 60,000 and 10,000 images for training and testing, respectively. 
	Each example corresponds to a grayscale image of size  $28 \times 28$ and is labeled with one of the ten classes. 
\end{itemize}

{\blue
To the best of our knowledge, there is no existing study on the OFDMA-F$^2$L systems.
Let alone solutions to the challenging discrete selections of the clients, channels, and modulations in the system.
With due diligence, we come up with the following three baselines, which are the possible alternatives to the OFDMA-F$^2$L with the proposed optimal client, subchannel, and modulation selections.
\begin{itemize}
    \item {\textbf{Baseline~1:}} This is OFDMA-F$^2$L with only two modulations, where the clients, subchannels, and modulations are optimally selected in the same way as done in Algorithm~2.
    $l = 0$ indicates no transmission from a selected client to the BS.
    $l = 1$ indicates the client transmits to the BS with a modulation rate of 4 bits/symbol.
    
    \item {\textbf{Baseline~2:}} This is OFDMA-F$^2$L with random client selection and optimal subchannel and modulation selections. The optimal subchannel and modulation selections are done by following the
    ``winner-takes-all'' strategy.

    \item {\textbf{Baseline~3:}} 
    This scheme runs Sync-FL in the OFDMA system, referred to as OFDMA-Sync-FL, where the selections of the clients, channels, and modulations are jointly optimized under a persistent number of local iterations across all selected clients in each aggregation round.
    Specifically, by setting $I_m = A$ in problem \textbf{P2} when $\zeta_m=1$ and following the analysis in Section~\ref{sec. problem transformation}, problem~\textbf{P4} is updated for OFDMA-Sync-FL as 
    \begin{equation}
\begin{aligned}
	\textbf{P4'}:\;\max_{\{\bm{\lambda} \}}\;\; 
&   \sum_{m \in {\cal M}} D_m^2 \zeta_m /D^2 \label{eq-P3 a} \\ 
	{\text{s.t.}}\; & T^{\rm UL}_m \leq T_{\rm th} - T^{\rm DL} - \frac{A\mu}{\beta_m}, \\
& \eqref{eq-P1 b}-\eqref{eq-P1 f}, \notag
	\end{aligned}
    \end{equation}
    which can be solved using the ``winner-takes-all'' strategy, as in Algorithm~\ref{algo_TD3}.

\end{itemize}
}
All experiments are performed on a server running on Ubuntu 20.04.5 LTS, equipped with Intel@ Xeon(R) W-2235 CPU@3.80GHz featuring 12 cores and an NVIDIA GeForce RTX~3090.

\subsection{OFDMA-F$^2$L on MLP Model}\label{subsec-mlp}
We first evaluate the impact of the optimal client, subchannel, and modulation selections on the learning performance (or convergence)
of OFDMA-F$^2$L on the MLP model.
The MLP is a feedforward neural network with fully-connected layers. 
The MNIST dataset is used to train the MLP with a 32-unit hidden layer.
We employ the softmax with ten classes regarding the ten digits. A cross-entropy loss function is used to capture the error of the model trained on the local dataset. 
The accuracy of the model is defined as the ratio of the number of correct predictions made by the model to the total number of predictions.
\begin{figure}[t]
	\centering
	\includegraphics[width=0.47\textwidth]{./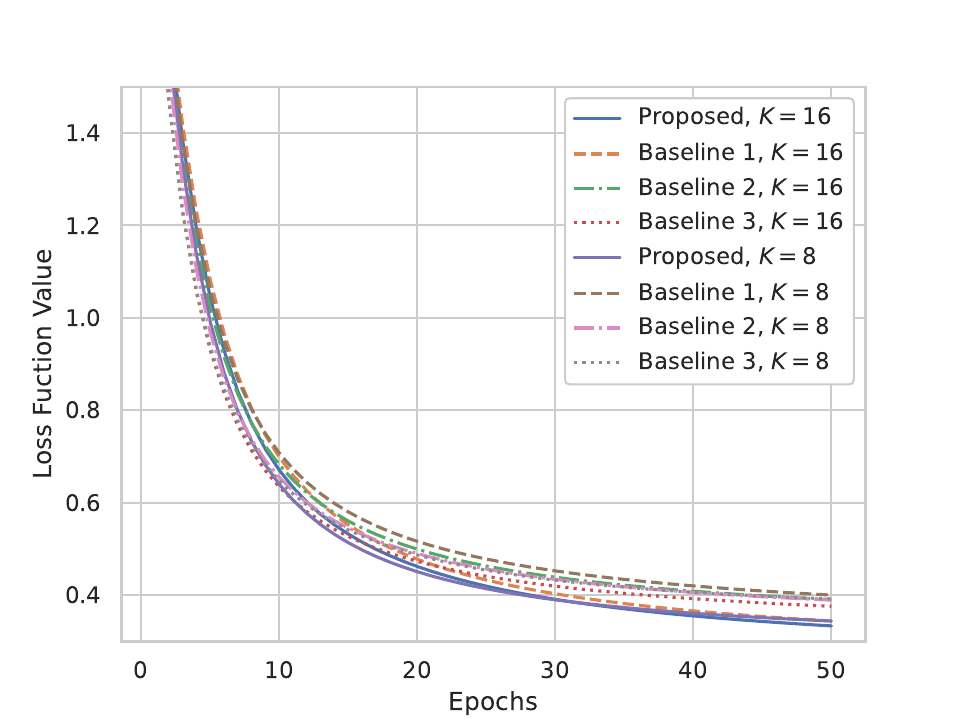}
	\caption{Loss of the MLP on the MNIST under $K = 8$ and $K=16$.}
	\label{fig.loss_mnist}
\end{figure}
\begin{figure}[t]
	\centering
	\includegraphics[width=0.47\textwidth]{./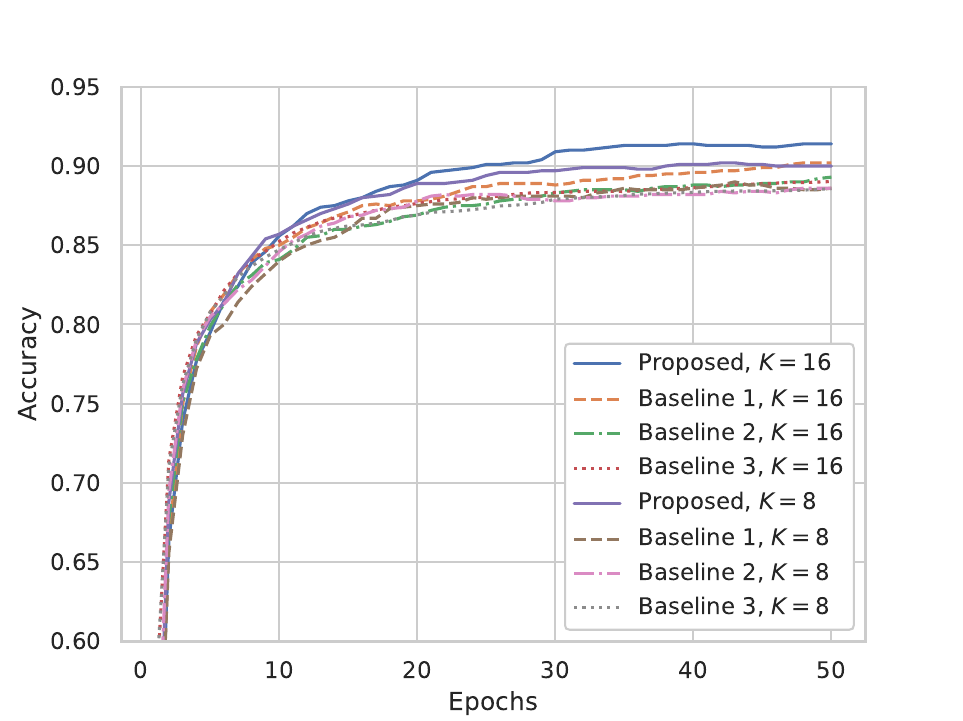}
	\caption{Accuracy of the MLP on the MNIST under $K = 8$ and $K=16$.}
	\label{fig.acc_mnist}
\end{figure}

Figs.~\ref{fig.loss_mnist} and \ref{fig.acc_mnist} show the (testing) loss (function value) and accuracy, respectively, of the MLP on the MNIST when there are $K = 8$ and $16$ subchannels. The $x$-axis indicates the global aggregations. 
It is seen that the FL algorithms show fast and smooth convergence within $50$ training epochs.
The loss function value reaches about $0.35$, and the accuracy is about $0.92$ under 
OFDMA-F$^2$L with the optimal selections.
Given the number of subchannels, i.e., $K = 8$ or $K=16$,
OFDMA-F$^2$L with the optimal selections outperforms the three baseline schemes with faster convergence speeds, smaller losses, and higher learning accuracies.
In addition, Baseline~1 is better than Baseline~2 since Baseline~1 optimizes the client selection policy while Baseline~2 selects clients randomly. The performances of Baselines~2 and 3 are close since they each execute part of the proposed OFDMA-F$^2$L with the optimal selections.

\subsection{OFDMA-F$^2$L on CNN Model}\label{subsec-cnn}
We also evaluate the impact of the optimal client, subchannel, and modulation selections on the learning performance (or convergence)
of OFDMA-F$^2$L on the CNN model. 
The CNN comprises two convolutional layers that have a kernel size of five, followed by three fully-connected layers. 
We train the CNN using the SGD optimizer.
The ReLU units are used along with a softmax function that corresponds to ten classes for the CIFAR10 and ten digits of the FMNIST. 
\begin{figure}[!t]
	\centering
	\includegraphics[width=0.47\textwidth]{./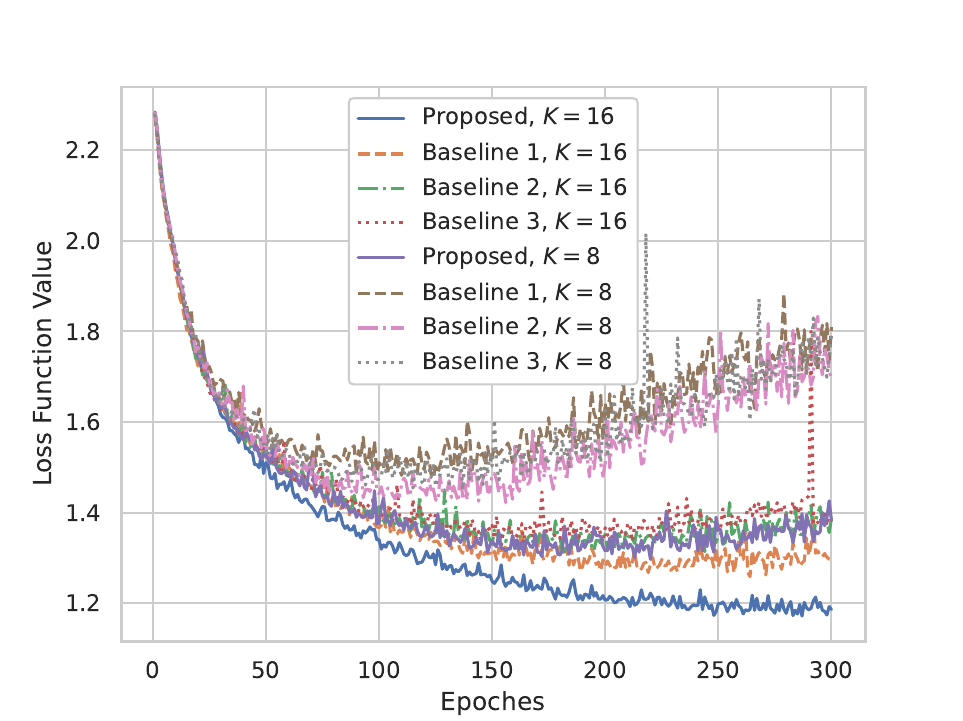}
	\caption{Loss of the CNN on the CIFAR10 under $K = 8$ and $K=16$.}
	\label{fig.loss_cifar}
\end{figure}
\begin{figure}[!t]
	\centering
	\includegraphics[width=0.47\textwidth]{./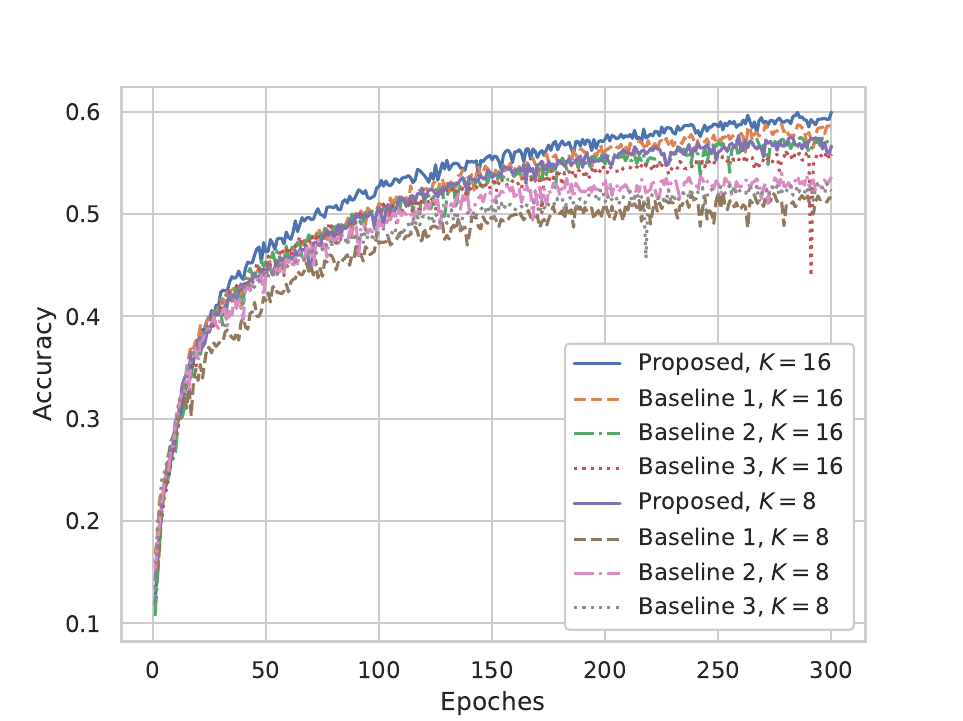}
	\caption{Accuracy of the CNN on the CIFAR10 under $K = 8$ and $K=16$.}
	\label{fig.acc_cifar}
\end{figure}

\begin{figure}[t]
	\centering
	\includegraphics[width=0.47\textwidth]{./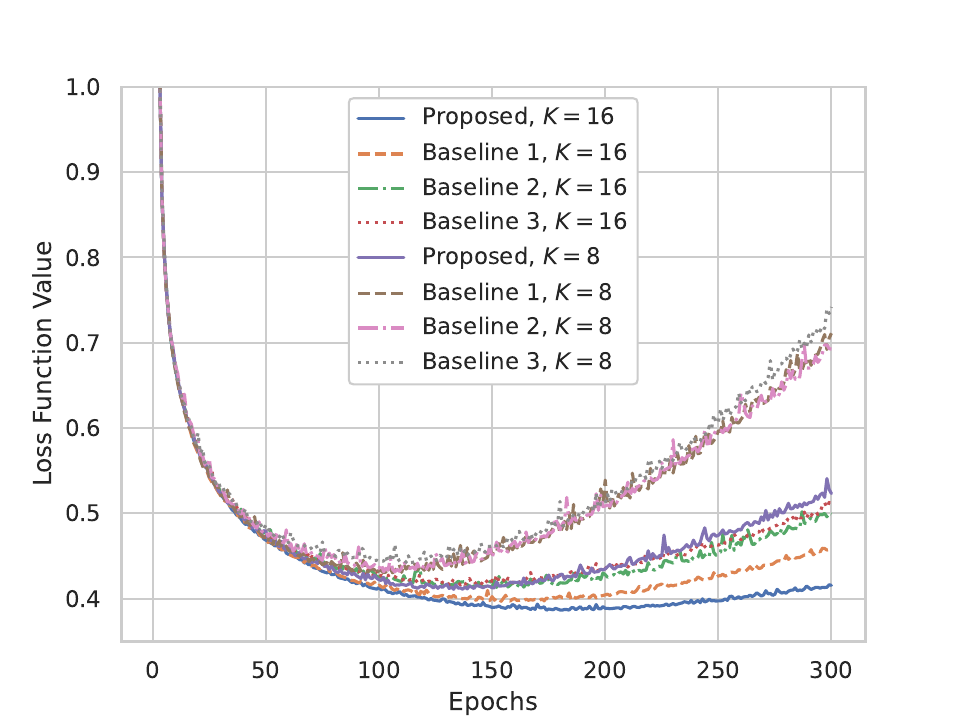}
	\caption{Loss of the CNN on the FMNIST under $K = 8$ and $K=16$.}
	\label{fig.loss_fmnist}
\end{figure}
\begin{figure}[t]
	\centering
	\includegraphics[width=0.47\textwidth]{./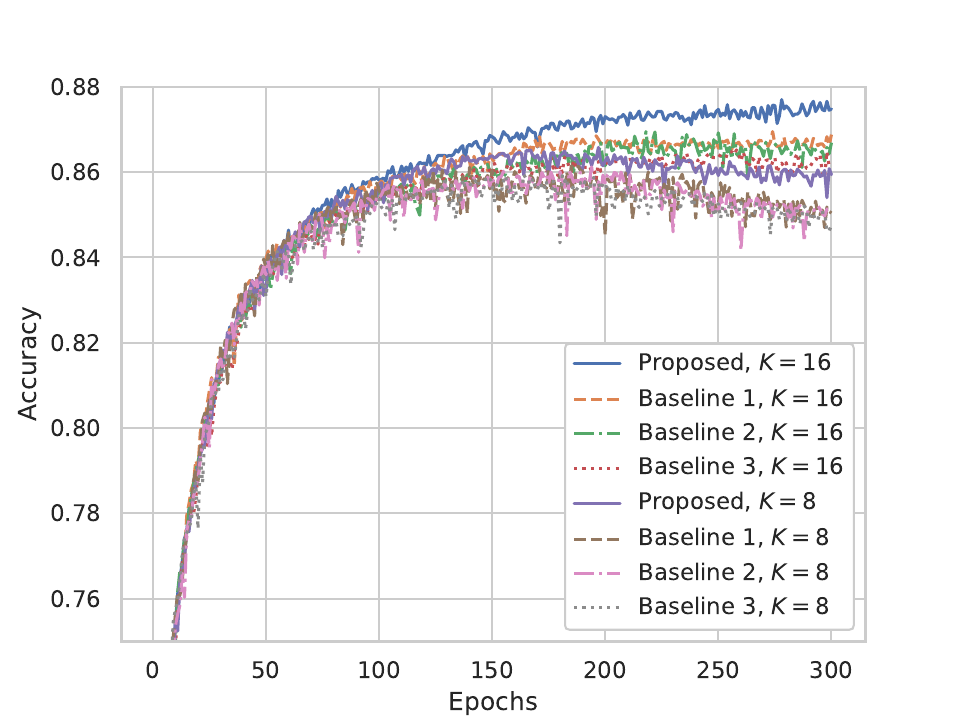}
	\caption{Accuracy of the CNN on the FMNIST under $K = 8$ and $K=16$.}
	\label{fig.acc_fmnist}
\end{figure}

Figs.~\ref{fig.loss_cifar}-\ref{fig.acc_fmnist} show the (testing) losses and accuracies
of the CNN 
on the CIFAR10 and FMNIST with $300$ training epochs.
As the training progresses, the testing loss decreases, and the testing accuracy increases.
At the beginning, the testing loss decreases rapidly,
indicating that the model learns fast from the available data.
Once a certain number of aggregations is reached, the rate of improvement slows down,
and the testing loss converges to the minimum value, as shown in Figs.~\ref{fig.loss_cifar} and~\ref{fig.loss_fmnist}.
Similarly, the testing accuracy increases rapidly at first, but levels off.
The model reaches a plateau in its performance in Figs.~\ref{fig.acc_cifar} and \ref{fig.acc_fmnist}.

It is observed in Figs.~\ref{fig.loss_cifar} and \ref{fig.acc_cifar} that on the CIFAR10 dataset, the loss function value stabilizes around $1.2$ and the accuracy is about $0.6$ under OFDMA-F$^2$L with the optimal selections. In contrast, the three baseline schemes yield higher loss function values and worse accuracies.
In particular, when $K = 8$, the (testing) {\red losses} of the three baselines first decrease and then increase as the training goes on,
indicating a divergence of the model learned from the training data
(or, in other words, the model incurs overfitting to the training data and does not generalize well
to the testing data).
This is because when $K$ is small (e.g., $K = 8$), the total transmission rate is low,
and the number of clients involved in the model aggregation is small, in which case, the baselines could not provide enough data for learning a good model.

It is also observed in Figs.~\ref{fig.loss_cifar} and \ref{fig.acc_cifar} that Baseline 3 is the least stable, since it necessitates all selected clients to upload their local models simultaneously after completing the same number of iterations using Sync-FL. In this case, the global training can be significantly affected by clients with poor channels or low computing powers, especially when a large dataset is concerned. 


In Figs.~\ref{fig.loss_fmnist} and \ref{fig.acc_fmnist}, we see that the FL algorithms are much smoother than on the CIFAR10 dataset and have faster convergence.
This is because the FMNIST is a dataset of images of clothing items,
which is considerably smaller compared to the CIFAR10 dataset.
The images in the FMNIST dataset are also simpler and have fewer variations in color and texture, compared to the CIFAR10 images. 
As a result, OFDMA-F$^2$L with the optimal selections may take a shorter time to converge to a satisfactory solution.
When $K=16$, the accuracy of OFDMA-F$^2$L with the optimal selections stabilizes at about $0.88$,
while the accuracies of the three baselines are about $0.865$.
In other words, OFDMA-F$^2$L with the optimal selections has better learning performance
than the baselines. 
When $K = 8$, the (testing) loss function values of all four algorithms initially decrease
and then increase during training, indicating that the models undergo overfitting. 
This is likely due to the low transmission rate and the limited number of clients involved in the training process, which are insufficient to capture the full complexity of the data distribution.

\subsection{Performance Analysis of OFDMA-F$^2$L}
\begin{figure}[t]
	\centering
	\includegraphics[width=0.47\textwidth]{./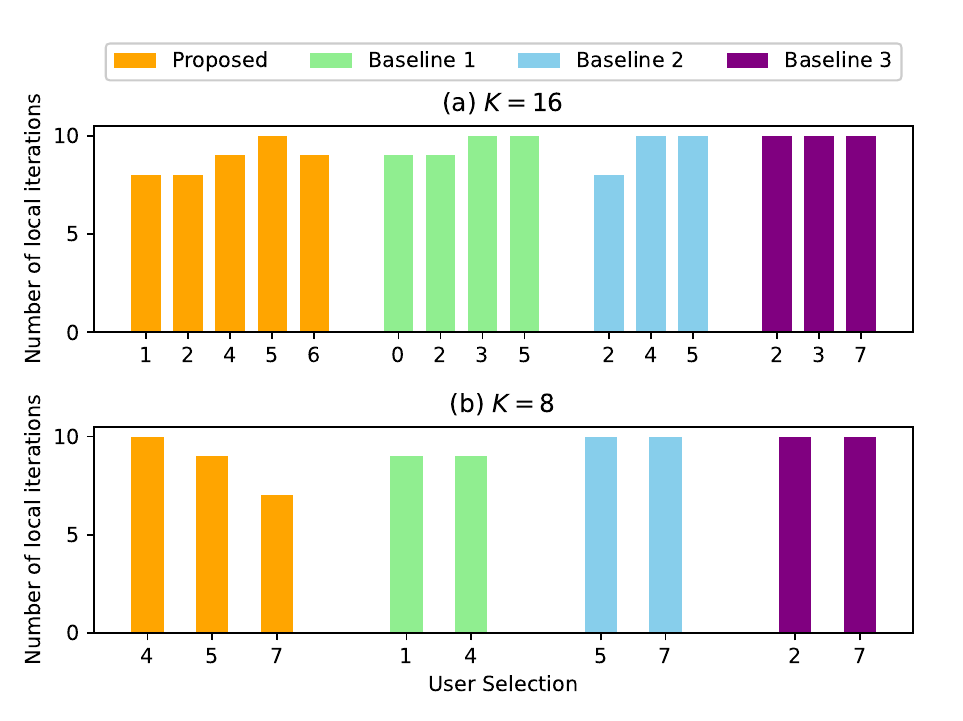}
	\caption{The number of local iterations in a selected aggregation round vs. the selected clients in the round.}
	\label{fig.user_selection}
\end{figure}
\begin{figure}[t]
	\centering
	\includegraphics[width=0.47\textwidth]{./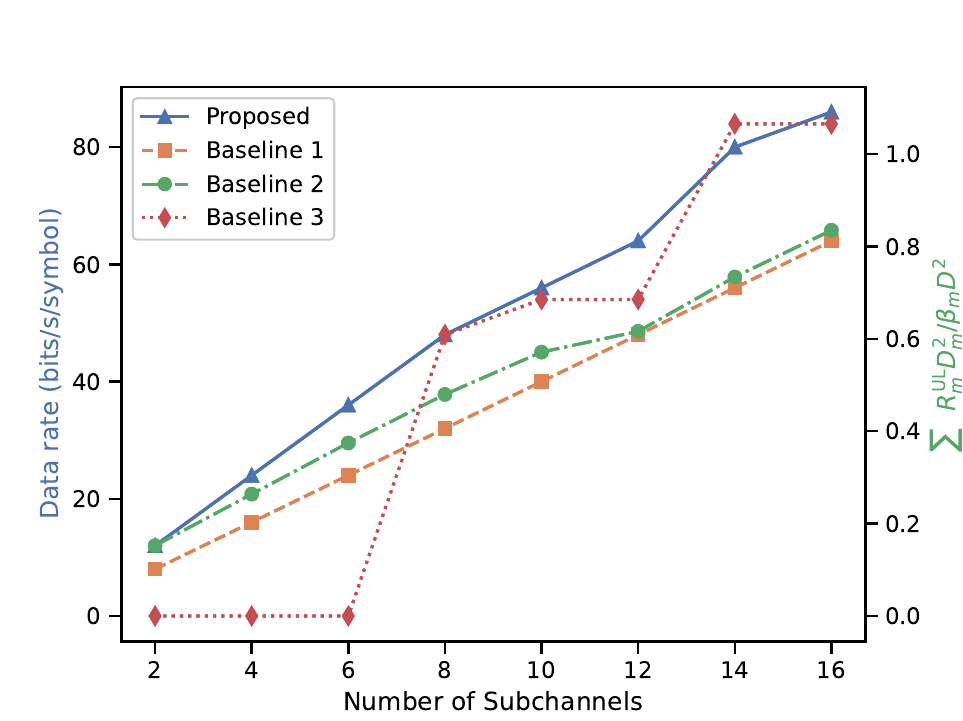}
	\caption{$\sum_{m \in {\cal M}} R^{\rm UL}_m \varrho_m^2 / \beta_m$ and sum data rate vs. the number of subchannels.}
	\label{fig.obj}
\end{figure}

Fig.~\ref{fig.user_selection} plots the client selection results and corresponding numbers of local iterations
for each selected client at a specific training epoch of OFDMA-F$^2$L
and the other baselines when $K=8$ and $K=16$.
Compared to the three baselines, more clients are selected under OFDMA-F$^2$L with the optimal selections.
Besides, OFDMA-F$^2$L with the optimal selections accommodates more local iterations per aggregation round than the three baselines,
shows a better learning performance than the baselines in Sections~\ref{subsec-mlp} and~\ref{subsec-cnn}. 
When $K = 16$, Baseline 1 supports more local iterations per aggregation round than Baseline 2.
When $K = 8$, Baseline~1 accommodates fewer local iterations per aggregation round than Baseline 2. This is consistent with the learning performance of the three baselines presented in Sections~\ref{subsec-mlp} and~\ref{subsec-cnn}. 

\begin{figure}[t]
	\centering
	\includegraphics[width=0.47\textwidth]{./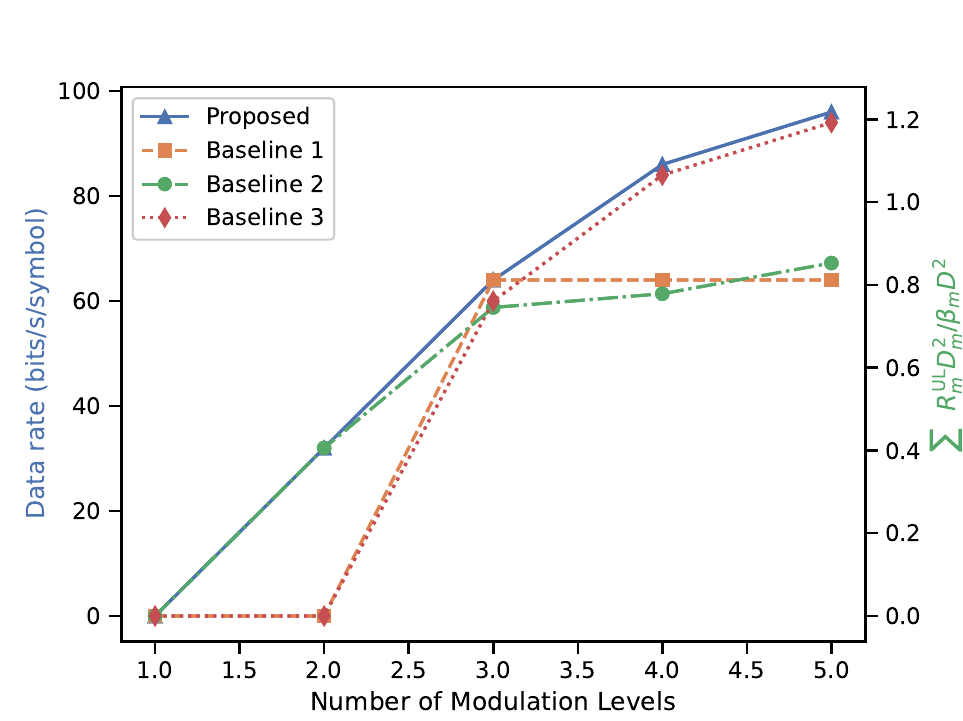}
	\caption{$\sum_{m \in {\cal M}} R^{\rm UL}_m D_m^2 / \beta_m D^2$ and sum data rate vs. the number of modulation modes.}
	\label{fig.obj2}
\end{figure}

Figs.~\ref{fig.obj} and \ref{fig.obj2} show the objective value of problem \textbf{P5}
and the (uplink) sum rate of the proposed OFDMA-F$^2$L with the optimal selections
when the number of subchannels or modulation modes changes.
In general, the performance metrics increase as the subchannels or modulation modes rise since the system can have a higher data rate, a shorter communication delay,
and hence the lower objective function value and higher accuracy.
OFDMA-F$^2$L with the optimal selections significantly outperforms the three baselines, as the subchannels or modulations increase.

It is also observed in Fig.~\ref{fig.obj} that the objective value and data rate change linearly
and synchronously with the number of subchannels under Baseline 1, as it has only two possible modulation modes.
When the number of subchannels $K \le 6$, the data rate is always zero under Baseline 3.
This is because the total delay constraint cannot be satisfied with insufficient communication resources.
It is further shown in Fig.~\ref{fig.obj2} that when there are more than eight modulation modes, the performance of OFDMA-F$^2$L with the optimal selections does not change, as the total power required reaches the power budget $P_m^{\max}$.
When there are over three modulation modes, the performance of Baseline 1 stays the same since it only considers two modulation modes.
In contrast, the unstable performance of Baseline 2 results from the random client selection policy. These results corroborate the merits of OFDMA-F$^2$L with the optimal selections.

\section{Conclusion}\label{sec-con}

We have presented a new OFDMA-F$^2$L framework, which
allows clients to train varying numbers of iterations per aggregation round
and concurrently upload their local models,
thereby increasing participating clients and extending their training times.
A convergence upper bound on OFDMA-F$^2$L has been derived and minimized by 
jointly optimizing the selections of clients, subchannels, and modulations to maximize the weighted sum rate of the clients in each aggregation round of OFDMA-F$^2$L.
A Lagrange-dual based method has been developed to maximize the weighted sum rate, resulting in a ``winner-takes-all'' strategy delivering almost surely optimal selections of clients, subchannels, and modulations.
As tested experimentally on MLP and CNN,
OFDMA-F$^2$L with the optimal selections can efficiently reduce communication delay,
and improve the training convergence and accuracy, e.g., by about 18\% and 5\%, compared to its potential alternatives.

\appendices
\section{Proof of Theorem~\ref{theo_convergence bound}}\label{appendix_convergence bound}

Assuming that the gradient of the global loss function is $L_c$-Lipschitz continuous and the first-order Taylor expansion can be conducted to obtain the upper bound of
$F({\bm \omega}^{(\tau+1)}) -F({\bm \omega}^{\tau })$, as provided by
\begin{equation}\label{eq_taylor_expansion}
    \begin{aligned}
    F({\bm \omega}^{(\tau+1)}) -F({\bm \omega}^{\tau }) 
    & \leq \nabla F\left({\bm \omega}^{\tau } \right) \left({\bm \omega}^{(\tau + 1)} - {\bm \omega}^{\tau } \right) + \frac{L_c}{2} \left\| {\bm \omega}^{(\tau + 1)} - {\bm \omega}^{\tau} \right\|^2.
    \end{aligned}
\end{equation} 
By substituting~\eqref{eq_local_SGD} and~\eqref{eq_global_weight} into~\eqref{eq_taylor_expansion}, we have
\begin{subequations}\label{eq_F_w_t1}
		\begin{align}
	   F \left( {\bm \omega}^{(\tau+1)} \right) - F\left( {\bm \omega}^{\tau} \right) 
	   & \leq  -  \nabla F\left({\bm \omega}^{\tau} \right) \sum_{m \in {\cal M}} \varrho^{\tau}_m \zeta^{\tau}_m \left(\sum^{I^{\tau}_m -1}_{i=0}\eta \nabla F_m ({\bm{\omega}}^{\tau}_m (i)) \right)  \notag \\
   & \quad +  \frac{L_c}{2} \bigg\|- \sum_{m \in {\cal M}} \varrho^{\tau}_m \zeta^{\tau}_m \left(\sum^{I^{\tau}_m -1}_{i=0} \eta \nabla F_m ({\bm{\omega}}^{\tau}_m(i)) \right) \bigg\|^2\\
	 & =   \frac{\eta^2 L_c}{2} \bigg\|\sum_{m \in {\cal M}} \varrho^{\tau}_m \zeta^{\tau}_m \left(\sum^{I^{\tau}_m-1}_{i=0} \nabla F_m ({\bm{\omega}}^{\tau }_m(i)) \right) \bigg\|^2  \notag \\
  & \quad -  \eta \nabla F\left({\bm \omega}^{\tau } \right) \sum_{m \in {\cal M}} \varrho^{\tau}_m \zeta^{\tau}_m \left(\sum^{I^{\tau}_m-1}_{i=0} \nabla F_m ({\bm{\omega}}^{\tau}_m(i)) \right).
		\end{align}
\end{subequations}

We proceed by computing the expected value of both sides of \eqref{eq_F_w_t1} regarding ${\boldsymbol{{\cal B}}}^{\tau}$, representing the sampled mini-batches from $\cal D$ at the $\tau$-th global aggregation. It follows that
\begin{equation}\label{eq_F_w_t1_avg}
    \begin{aligned}
   \mathbb{E}_{{\boldsymbol{{\cal B}}}^{\tau}}\!\left\lbrace\! F\left({\bm \omega}^{(\tau + 1)} \right) \right\rbrace 
    & \leq  F\left({\bm \omega}^{\tau} \right) - \eta \left\|\nabla F\left({\bm \omega}^{\tau }\right) \right\|^2 \\
     & + \frac{\eta^2 L_c}{2} \mathbb{E}_{{\boldsymbol{{\cal B}}}^{\tau}}\!\left\lbrace\!\left\|\! \sum_{m \in {\cal M}} \varrho^{\tau}_m \zeta^{\tau}_m \left(\sum^{I^{\tau}_m -1}_{i=0} \nabla F_m ({\bm{\omega}}^{\tau}_m(i); {\cal B}^{\tau }_m(i)) \right) \right\|^2 \!\right\rbrace\!.
    \end{aligned}
\end{equation}
Since $\varrho^{\tau}_m = \frac{A}{I^{\tau}_m} \cdot \frac{D_m}{D}$, $\mathbb{E}_{{\boldsymbol{{\cal B}}}^{\tau}}\Big\| \sum_{m \in {\cal M}} \varrho^{\tau}_m  \left(\sum^{I^{\tau}_m-1}_{i=0} \nabla F_m ({\bm{\omega}}^{\tau }_m(i); {\cal B}^{\tau }_m(i))\right) \Big\|^2$ in \eqref{eq_F_w_t1_avg} is rewritten as
\begin{subequations}\label{eq_delta_Fm}
	\begin{align}
	 & \mathbb{E}_{{\boldsymbol{{\cal B}}}^{\tau}}\left\lbrace \left\| \sum_{m \in {\cal M}} \frac{A D_m \zeta^{\tau}_m}{D I^{\tau}_m} \left(\sum^{I^{\tau}_m -1}_{i=0} \nabla F_m ({\bm{\omega}}^{\tau }_m(i);{\cal B}^{\tau}_m(i)) \right) \right\|^2 \right\rbrace \\ \label{eq_delta_Fm a}
	 & = \sum_{m \in {\cal M}} \frac{A^2 D_m^2 \zeta^{\tau}_m}{D^2} \mathbb{E}_{{{\cal B}}^{\tau}_m}\left\lbrace \left\| \frac{1}{I^{\tau}_m} \sum^{I^{\tau}_m-1}_{i=0} \nabla F_m ({\bm{\omega}}^{\tau}_m(i);{\cal B}^{\tau}_m(i)) \right\|^2 \right\rbrace\\ \label{eq_delta_Fm b}
	& = \sum_{m \in {\cal M}}  \frac{A^2 D_m^2 \zeta^{\tau}_m}{D^2\left( I^{\tau}_m \right)^2} \sum^{I^{\tau}_m-1}_{i=0}  \mathbb{E}_{{{\cal B}}^{\tau}_m(i)} \left\|\nabla F_m ({\bm{\omega}}^{\tau}_m(i);{\cal B}^{\tau}_m(i)) \right\|^2 \\ \label{eq_delta_Fm c}
	& \leq \sum_{m \in {\cal M}}  \frac{A^2 D_m^2 \zeta^{\tau}_m}{D^2 I^{\tau}_m} \left(\kappa_1 + \kappa_2 \|\nabla F(\bm{\omega}^{\tau})\|^2 \right),
	\end{align}
\end{subequations}
where \eqref{eq_delta_Fm a} is due to the i.i.d. datasets of all clients; \eqref{eq_delta_Fm b} is because the sampled mini-batches are considered to be i.i.d. between the two consecutive aggregations; and \eqref{eq_delta_Fm c} is due to the assumption of $\mathbb{E}_{{\cal B}^{\tau}_m}\left\| \nabla F_m ({\bm{\omega}}^{\tau }_m(i); {\cal B}^{\tau }_m(i)) \right\|^2  \leq \kappa_1 + \kappa_2 \|\nabla F(\bm{\omega}^{\tau })\|^2, \forall m~\text{and}~\tau$.

Since $\mathbb{E}_{\boldsymbol{{\cal B}}^{\tau}}\lbrace F({\bm \omega}^{\tau + 1} ) \rbrace =F({\bm \omega}^{\tau + 1}) $, we substitute \eqref{eq_delta_Fm} into~\eqref{eq_F_w_t1_avg}, yielding 
\begin{subequations}\label{eq_F_w1 2}
	\begin{align}
	 F({\bm \omega}^{(\tau + 1)}) & \leq F({\bm \omega}^{\tau}) - \phi_2 \|\nabla F({\bm \omega}^{\tau }) \|^2 + \phi_1, \tag{\ref{eq_F_w1 2}}
	\end{align}
\end{subequations}
where $\phi_1 \triangleq \frac{\eta^2 L_c}{2} \sum_{m \in {\cal M}}  \frac{A^2 \kappa_1 D_m^2 \zeta^{\tau}_m}{D^2 I^{\tau}_m}$ and $\phi_2 \triangleq \eta - \frac{\eta^2 L_c}{2} \sum_{m \in {\cal M}}  \frac{A^2 \kappa_2 D_m^2 \zeta^{\tau}_m}{D^2 I^{\tau}_m}$.

Detracting $F({\bm \omega}^*)$ from both sides of \eqref{eq_F_w1 2}, it follows
\begin{equation}\label{eq_Fw1_Fw}
	\begin{aligned}
	 F({\bm \omega}^{(\tau + 1)} ) - F({\bm \omega}^*)
	 \leq  F({\bm \omega}^{\tau } ) - F({\bm \omega}^*) - \phi_2 \|\nabla F({\bm \omega}^{\tau }) \|^2 + \phi_1.
	\end{aligned}
\end{equation}
Considering Polyak-Lojasiewicz condition,
we have
\begin{equation}\label{eq_delta_Fw}
     \|\nabla F({\bm \omega}^{\tau } ) \|^2 \geq 2 \rho \left( F(\bm{\omega}^{\tau }) - F(\bm{\omega}^{\ast})\right).
\end{equation}
By substituting \eqref{eq_delta_Fw} into \eqref{eq_Fw1_Fw}, we have
\begin{equation}\label{eq-recurrence}
	\begin{aligned}
	 F({\bm \omega}^{(\tau + 1)} ) - F({\bm \omega}^*)
	 \leq  (1 -  2\rho \phi_2)\left[F({\bm \omega}^{\tau } ) - F({\bm \omega}^*)\right] + \phi_1.
	\end{aligned}
\end{equation}
Based on~\eqref{eq-recurrence}, we can use mathematical induction and finally obtain
\begin{equation}
\begin{aligned}
F({\bm \omega}^{\tau } ) - F({\bm \omega}^*) & \leq \left[ F({\bm \omega}^0) - F({\bm \omega}^*)\right] \left(1 - 2\rho \phi_2\right)^{\tau} \\
& \quad + \phi_1 \frac{1-\left(1 - 2\rho \phi_2\right)^{\tau}}{ 2\rho \phi_2},
\end{aligned}
\end{equation}
which concludes this proof.

\ifCLASSOPTIONcaptionsoff
\newpage
\fi

\bibliographystyle{IEEEtran}
\bibliography{FL_Ref}

\begin{thebibliography}{10}
\providecommand{\url}[1]{#1}
\csname url@samestyle\endcsname
\providecommand{\newblock}{\relax}
\providecommand{\bibinfo}[2]{#2}
\providecommand{\BIBentrySTDinterwordspacing}{\spaceskip=0pt\relax}
\providecommand{\BIBentryALTinterwordstretchfactor}{4}
\providecommand{\BIBentryALTinterwordspacing}{\spaceskip=\fontdimen2\font plus
\BIBentryALTinterwordstretchfactor\fontdimen3\font minus
  \fontdimen4\font\relax}
\providecommand{\BIBforeignlanguage}[2]{{%
\expandafter\ifx\csname l@#1\endcsname\relax
\typeout{** WARNING: IEEEtran.bst: No hyphenation pattern has been}%
\typeout{** loaded for the language `#1'. Using the pattern for}%
\typeout{** the default language instead.}%
\else
\language=\csname l@#1\endcsname
\fi
#2}}
\providecommand{\BIBdecl}{\relax}
\BIBdecl

\bibitem{lim20}
W.~Y.~B. Lim, N.~C. Luong, D.~T. Hoang, Y.~Jiao, Y.-C. Liang, Q.~Yang,
  D.~Niyato, and C.~Miao, ``Federated learning in mobile edge networks: A
  comprehensive survey,'' \emph{IEEE Commun. Surveys Tuts.}, vol.~22, no.~3,
  pp. 2031--2063, 3rd Quart. 2020.

\bibitem{hu21dml}
S.~Hu, X.~Chen, W.~Ni, E.~Hossain, and X.~Wang, ``Distributed machine learning
  for wireless communication networks: Techniques, architectures, and
  applications,'' \emph{IEEE Commun. Surveys Tuts.}, vol.~23, no.~3, pp.
  1458--1493, 3rd Quart., 2021.

\bibitem{quzhihao22}
Z.~Qu, S.~Guo, H.~Wang, B.~Ye, Y.~Wang, A.~Y. Zomaya, and B.~Tang, ``Partial
  synchronization to accelerate federated learning over relay-assisted edge
  networks,'' \emph{IEEE Trans. Mobile Comput.}, vol.~21, no.~12, pp.
  4502--4516, Dec. 2022.

\bibitem{vandinh21}
V.-D. Nguyen, S.~K. Sharma, T.~X. Vu, S.~Chatzinotas, and B.~Ottersten,
  ``Efficient federated learning algorithm for resource allocation in wireless
  iot networks,'' \emph{IEEE Internet Things J.}, vol.~8, no.~5, pp.
  3394--3409, Mar. 2021.

\bibitem{lagc20}
J.~Zhang and O.~Simeone, ``{LAGC: L}azily aggregated gradient coding for
  straggler-tolerant and communication-efficient distributed learning,''
  \emph{IEEE Trans. Neural Netw. Learn. Syst.}, pp. 1--13, 2020.

\bibitem{ruan2021towards}
Y.~Ruan, X.~Zhang, S.-C. Liang, and C.~Joe-Wong, ``Towards flexible device
  participation in federated learning,'' in \emph{Proc. Int. Conf. Artif.
  Intell. Statist.}, vol. 130.\hskip 1em plus 0.5em minus 0.4em\relax PMLR,
  Apr. 2021, pp. 3403--3411.

\bibitem{xie2020}
C.~Xie, S.~Koyejo, and I.~Gupta, ``Asynchronous federated optimization,'' in
  \emph{Annual Wrkshps Optim. Mach. Learn.}, 2020.

\bibitem{trainyang20}
S.~Yang and Y.~Liu, ``Training efficiency of federated learning: A wireless
  communication perspective,'' in \emph{Proc. Int. Conf. Wireless Commun.
  Signal Process. (WCSP)}, Oct. 2020, pp. 922--926.

\bibitem{qunsong20}
Q.~Zeng, Y.~Du, K.~Huang, and K.~K. Leung, ``Energy-efficient radio resource
  allocation for federated edge learning,'' in \emph{Proc. IEEE Int. Conf.
  Commun. Wrkshps. (ICC Wrkshps)}, Jun. 2020, pp. 1--6.

\bibitem{jiaoyutao21}
Y.~Jiao, P.~Wang, D.~Niyato, B.~Lin, and D.~I. Kim, ``Toward an automated
  auction framework for wireless federated learning services market,''
  \emph{IEEE Trans. Mobile Comput.}, vol.~20, no.~10, pp. 3034--3048, Oct.
  2021.

\bibitem{xutwc21}
J.~Xu and H.~Wang, ``Client selection and bandwidth allocation in wireless
  federated learning networks: A long-term perspective,'' \emph{IEEE Trans.
  Wireless Commun.}, vol.~20, no.~2, pp. 1188--1200, Feb. 2021.

\bibitem{linxi23}
X.~Lin, J.~Wu, J.~Li, X.~Zheng, and G.~Li, ``Friend-as-learner: Socially-driven
  trustworthy and efficient wireless federated edge learning,'' \emph{IEEE
  Trans. Mobile Comput.}, vol.~22, no.~1, pp. 269--283, Jan. 2023.

\bibitem{amr20}
A.~Abutuleb, S.~Sorour, and H.~S. Hassanein, ``Joint task and resource
  allocation for mobile edge learning,'' in \emph{Proc. IEEE Globecom Conf.
  (GLOBECOM)}, Dec. 2020, pp. 1--6.

\bibitem{chenhao22}
H.~Chen, S.~Huang, D.~Zhang, M.~Xiao, M.~Skoglund, and H.~V. Poor, ``Federated
  learning over wireless iot networks with optimized communication and
  resources,'' \emph{IEEE Internet of Things Journal}, vol.~9, no.~17, pp.
  16\,592--16\,605, Sep. 2022.

\bibitem{qinzhijin22}
Z.~Ji and Z.~Qin, ``Federated learning for distributed energy-efficient
  resource allocation,'' in \emph{Proc. IEEE Int. Conf. Commun. (ICC)}, May
  2022, pp. 1--6.

\bibitem{zhaohui21}
Z.~Yang, M.~Chen, W.~Saad, C.~S. Hong, and M.~Shikh-Bahaei, ``Energy efficient
  federated learning over wireless communication networks,'' \emph{IEEE Trans.
  Wireless Commun.}, vol.~20, no.~3, pp. 1935--1949, Mar. 2021.

\bibitem{heyejun23}
Y.~He, M.~Yang, Z.~He, and M.~Guizani, ``Resource allocation based on digital
  twin-enabled federated learning framework in heterogeneous cellular
  network,'' \emph{IEEE Trans. Veh. Tech.}, vol.~72, no.~1, pp. 1149--1158,
  Jan. 2023.

\bibitem{wunoma22}
Y.~Wu, Y.~Song, T.~Wang, L.~Qian, and T.~Q.~S. Quek, ``Non-orthogonal multiple
  access assisted federated learning via wireless power transfer: A
  cost-efficient approach,'' \emph{IEEE Trans. Commun.}, vol.~70, no.~4, pp.
  2853--2869, Apr. 2022.

\bibitem{salehi21}
M.~Salehi and E.~Hossain, ``Federated learning in unreliable and
  resource-constrained cellular wireless networks,'' \emph{IEEE Trans.
  Commun.}, vol.~69, no.~8, pp. 5136--5151, Aug. 2021.

\bibitem{chenyang20}
Y.~Chen, X.~Sun, and Y.~Jin, ``Communication-efficient federated deep learning
  with layerwise asynchronous model update and temporally weighted
  aggregation,'' \emph{IEEE Trans. Neural Netw. Learn. Syst.}, vol.~31, no.~10,
  pp. 4229--4238, Oct. 2020.

\bibitem{chai21}
Z.~Chai, Y.~Chen, A.~Anwar, L.~Zhao, Y.~Cheng, and H.~Rangwala, ``{FedAT}: A
  high-performance and communication-efficient federated learning system with
  asynchronous tiers,'' in \emph{Proc. Int. Conf. High Perfor. Comput.,
  Networking, Storage Analysis}, 2021, pp. 1--17.

\bibitem{gubin22}
B.~Gu, A.~Xu, Z.~Huo, C.~Deng, and H.~Huang, ``Privacy-preserving asynchronous
  vertical federated learning algorithms for multiparty collaborative
  learning,'' \emph{IEEE Trans. Neural Netw. Learn. Syst.}, vol.~33, no.~11,
  pp. 6103--6115, Nov. 2022.

\bibitem{zhangyu23}
Y.~Zhang, D.~Liu, M.~Duan, L.~Li, X.~Chen, A.~Ren, Y.~Tan, and C.~Wang,
  ``{FedMDS}: An efficient model discrepancy-aware semi-asynchronous clustered
  federated learning framework,'' \emph{IEEE Trans. Parall. Distr. Syst.},
  vol.~34, no.~3, pp. 1007--1019, Mar. 2023.

\bibitem{hu2021source}
H.~Hu, Z.~Salcic, L.~Sun, G.~Dobbie, and X.~Zhang, ``Source inference attacks
  in federated learning,'' in \emph{2021 IEEE International Conference on Data
  Mining (ICDM)}.\hskip 1em plus 0.5em minus 0.4em\relax IEEE, 2021, pp.
  1102--1107.

\bibitem{yuan2023}
X.~Yuan, W.~Ni, M.~Ding, K.~Wei, J.~Li, and H.~Vincent~Poor,
  ``Amplitude-varying perturbation for balancing privacy and utility in
  federated learning,'' \emph{IEEE Trans. Inf. Forensics Security}, vol.~18,
  pp. 1884--1897, 2023.

\bibitem{wangzhyu22}
Z.~Wang, Z.~Zhang, Y.~Tian, Q.~Yang, H.~Shan, W.~Wang, and T.~Q.~S. Quek,
  ``Asynchronous federated learning over wireless communication networks,''
  \emph{IEEE Trans. Wireless Commun.}, vol.~21, no.~9, pp. 6961--6978, Sep.
  2022.

\bibitem{jianchun23}
J.~Liu, H.~Xu, L.~Wang, Y.~Xu, C.~Qian, J.~Huang, and H.~Huang, ``Adaptive
  asynchronous federated learning in resource-constrained edge computing,''
  \emph{IEEE Trans. Mobile Comput.}, vol.~22, no.~2, pp. 674--690, Feb. 2023.

\bibitem{zhugx}
G.~Zhu, Y.~Wang, and K.~Huang, ``Broadband analog aggregation for low-latency
  federated edge learning,'' \emph{IEEE Trans. Wireless Commun.}, vol.~19,
  no.~1, pp. 491--506, Jan. 2020.

\bibitem{zhugxu}
G.~Zhu, Y.~Du, D.~Gündüz, and K.~Huang, ``One-bit over-the-air aggregation
  for communication-efficient federated edge learning: Design and convergence
  analysis,'' \emph{IEEE Trans. Wireless Commun.}, vol.~20, no.~3, pp.
  2120--2135, Mar. 2021.

\bibitem{dujun}
J.~Du, B.~Jiang, C.~Jiang, Y.~Shi, and Z.~Han, ``Gradient and channel aware
  dynamic scheduling for over-the-air computation in federated edge learning
  systems,'' \emph{IEEE J. Sel. Areas Commun.}, vol.~41, no.~4, pp. 1035--1050,
  Apr. 2023.

\bibitem{zhongcx}
C.~Zhong, H.~Yang, and X.~Yuan, ``Over-the-air federated multi-task learning
  over mimo multiple access channels,'' \emph{IEEE Trans. Wireless Commun.},
  vol.~22, no.~6, pp. 3853--3868, Jun. 2023.

\bibitem{xichen}
X.~Yu, B.~Xiao, W.~Ni, and X.~Wang, ``Optimal adaptive power control for
  over-the-air federated edge learning under fading channels,'' \emph{IEEE
  Trans. Commun.}, vol.~71, no.~9, pp. 5199--5213, Sep. 2023.

\bibitem{mcmahan2017communication}
B.~McMahan, E.~Moore, D.~Ramage, S.~Hampson, and B.~A. y~Arcas,
  ``Communication-efficient learning of deep networks from decentralized
  data,'' in \emph{Proc. Int. Conf. Artif. Intell. Statist.}\hskip 1em plus
  0.5em minus 0.4em\relax PMLR, 2017, pp. 1273--1282.

\bibitem{Viswanathan1999Capacity}
H.~{Viswanathan}, ``Capacity of {M}arkov channels with receiver {CSI} and
  delayed feedback,'' \emph{IEEE Trans. Inf. Theory}, vol.~45, no.~2, pp.
  761--771, 1999.

\bibitem{Goldsmith1998}
A.~J. {Goldsmith} and S.~. {Chua}, ``Adaptive coded modulation for fading
  channels,'' \emph{IEEE Trans. Commun.}, vol.~46, no.~5, pp. 595--602, 1998.

\bibitem{Malik2018Interference}
H.~Malik, M.~M. Alam, Y.~Le~Moullec, and Q.~Ni, ``Interference-aware radio
  resource allocation for 5{G} ultra-reliable low-latency communication,'' in
  \emph{Proc. IEEE Globecom Workshops (GC Wkshps)}, 2018, pp. 1--6.

\bibitem{o2006metric}
\BIBentryALTinterwordspacing
M.~O'Searcoid, \emph{Metric Spaces}, ser. Springer Undergraduate Mathematics
  Series.\hskip 1em plus 0.5em minus 0.4em\relax Springer London, 2006.
  [Online]. Available: \url{https://books.google.com.au/books?id=aP37I4QWFRcC}
\BIBentrySTDinterwordspacing

\bibitem{karimi2016linear}
H.~Karimi, J.~Nutini, and M.~Schmidt, ``Linear convergence of gradient and
  proximal-gradient methods under the {P}olyak-{L}ojasiewicz condition,'' in
  \emph{Joint European Conference on Machine Learning and Knowledge Discovery
  in Databases}.\hskip 1em plus 0.5em minus 0.4em\relax Springer, 2016, pp.
  795--811.

\bibitem{He2014Optimal}
T.~He, X.~Wang, and W.~Ni, ``Optimal chunk-based resource allocation for
  {OFDMA} systems with multiple {BER} requirements,'' \emph{IEEE Trans. Veh.
  Technol.}, vol.~63, no.~9, pp. 4292--4301, 2014.

\bibitem{yuan23}
X.~Yuan, S.~Hu, W.~Ni, R.-P. Liu, and X.~Wang, ``Joint user, channel,
  modulation-coding selection, and {RIS} configuration for jamming resistance
  in multiuser {OFDMA} systems,'' \emph{IEEE Trans. Commun.}, vol.~71, no.~3,
  pp. 1631--1645, Mar. 2023.

\bibitem{boyd2004convex}
S.~Boyd and L.~Vandenberghe, \emph{Convex {O}ptimization}.\hskip 1em plus 0.5em
  minus 0.4em\relax Cambridge university press, 2004.

\bibitem{GaoOptimal2011}
N.~Gao and X.~Wang, ``Optimal subcarrier-chunk scheduling for wireless {OFDMA}
  systems,'' \emph{IEEE Trans. Wireless Commun.}, vol.~10, no.~7, pp.
  2116--2123, 2011.

\bibitem{lenz2002singular}
D.~Lenz, ``Singular spectrum of {L}ebesgue measure zero for one-dimensional
  quasicrystals,'' \emph{Communications in mathematical physics}, vol. 227,
  no.~1, pp. 119--130, 2002.

\bibitem{xiaojing}
X.~Chen, J.~Cui, W.~Ni, X.~Wang, Y.~Zhu, J.~Zhang, and S.~Xu, ``{DFT-s-OFDM}:
  Enabling flexibility in frequency selectivity and multiuser diversity for
  5{G},'' \emph{IEEE Consumer Electron. Mag.}, vol.~9, no.~6, pp. 15--22, Nov.
  2020.

\bibitem{chenmzz}
M.~Chen, Z.~Yang, W.~Saad, C.~Yin, H.~V. Poor, and S.~Cui, ``A joint learning
  and communications framework for federated learning over wireless networks,''
  \emph{IEEE Trans. Wireless Commun.}, vol.~20, no.~1, pp. 269--283, Jan. 2021.

\end{thebibliography}

\end{document}